\newcommand{\alast}{\alpha^{\ast}}
\definecolor{refkey}{rgb}{0,0,1}
\definecolor{labelkey}{rgb}{0,0,1}
\numberwithin{equation}{section}
\newtheorem{theorem}{Theorem}[section]
\newtheorem{remark}{Remark}[section]
\begin{document}

\title{Two--phase model of compressive stress induced on a surrounding hyperelastic medium by an expanding tumour}
\author[1]{Gopikrishnan C. Remesan\thanks{Email: gopikrishnan.chirappurathu@unimi.it}}
\author[2]{Jennifer A. Flegg\thanks{Email: jennifer.flegg@unimelb.edu.au}}
\author[3]{Helen M. Byrne\thanks{Email: helen.byrne@maths.ox.ac.uk}}

\affil[1]{\small{Dipartimento di Matematica ``F. Enriques'', Universit\`{a} degli Studi di Milano, 20133 Milan, Italy}}
\affil[2]{\small{University of Melbourne, Melbourne, Australia}}
\affil[3]{\small{Mathematical Institute, University of Oxford, Oxford, United Kingdom}}
\maketitle

\begin{abstract}
\emph{In vitro} experiments in which tumour cells are seeded in a gelatinous medium, or hydrogel,
show how mechanical interactions between tumour cells and the tissue in which they are embedded, together with
local levels of an externally-supplied, diffusible nutrient (e.g., oxygen), affect the tumour's growth dynamics. 
In this article, we present a mathematical model that describes these \emph{in vitro} experiments. 
We use the model to understand how tumour growth generates
mechanical deformations in the hydrogel and how these deformations in turn influence the tumour's growth.
The hydrogel is viewed as a nonlinear hyperelastic material and
the tumour is modelled as a two-phase mixture, comprising a viscous tumour cell phase and an isotropic, inviscid interstitial fluid phase. 
Using a combination of numerical and analytical techniques, we show how the tumour's growth dynamics change as the mechanical properties of the hydrogel vary.
When the hydrogel is soft, nutrient availability dominates the dynamics: the tumour evolves to a large equilibrium configuration where the proliferation
rate of nutrient-rich cells on the tumour boundary balances the death rate of nutrient-starved cells in the central, necrotic core. As
the hydrogel stiffness increases, mechanical resistance to growth increases and the tumour's equilibrium size decreases. Indeed,
for small tumours embedded in stiff hydrogels, the inhibitory force experienced by the tumour cells may be so large that the tumour is eliminated. 
Analysis of the model identifies parameter regimes in which the presence of the hydrogel drives tumour elimination. 

\vspace{0.5cm} 
\noindent\normalsize{\textbf{Mathematics Subject Classification. } 35Q92, 35Q35, 35Q74, 74B20 }
\vspace{0.5cm} \\
\normalsize{	\textbf{Keywords. } Two--phase mixture model; mechanical stress; hyperelasticity; strain energy density; tumour growth and elimination}
\end{abstract}

\section{Introduction}
During its earliest stage of avascular growth, a tumour consists of proliferating cells and lacks a vascular network. 
External nutrients that diffuse into the tumour provide the energy that the cells need to proliferate~\cite{hirschaeuser}. When the tumour reaches a size at which diffusive transport of nutrients can no longer sustain its energy 
requirements, the tumour acquires a blood supply from the surrounding tissue vasculature 
via a process known as angiogenesis~\cite{ferrara2002vegf}. 
Once the tumour has acquired its own network of blood vessels, its nutrient supply increases, enabling continued growth, expansion and the onset of malignancy~\cite{duffy2008}. For these reasons, it is important to understand the mechanisms that control the avascular tumour growth and, in particular, to identify strategies for its inihibition.

In addition to nutrient availability, many other phenomena are known to influence avascular tumour growth. 
These include interactions with immune cells and stromal cells~\cite{bissell2001putting,gonzalez2018} 
and mechanical effects \cite{chaudhuri2018mechanobiology, jain2014role}. 
Of particular interest in this paper is the impact that mechanical effects have on tumour growth; the \emph{in vitro} experimental work performed by Helmlinger \emph{et al.} providing
our primary motivation~\cite{Helmlinger1997}. By embedding small clusters of tumour cells in hydrogels of different stiffnesses, they showed how mechanical resistance can
inhibit a tumour's growth dynamics. 
As the tumour cells proliferate, the tumour increases in size and deforms the surrounding hydrogel.
The hydrogel, in turn, exerts a compressive stress on the tumour cells, which inhibits their net rate of cell proliferation and, thereby, reduces the equilibrium size to which
the tumour grows.  Subsequent studies have shown how externally imposed compressive stresses can affect cell proliferation and death rates
within multicellular tumour spheroids~\cite{rakesh1, delarue}.
Takao \emph{et al.}  showed that short periods of oscillatory compressive stress can also stimulate 
extensive cell death in breast cancer cell lines~\cite{Takaobio}. 
Mechanical stress drives other behavioural changes: 
it may activate tumour cells to become more motile and to model their tissue environment (e.g., the stiffness and fibrillar structure of the extracellular 
matrix~\cite{oncogenes}). Recent studies also indicate that mechanical stress has a critical influence in promoting metastasis and tumour progression~\cite{liu2020}. Taken together, these studies illustrate the 
need for increased understanding of the mechanisms by which mechanical stimuli inhibit tumour growth
and the clinical potential for exploiting tumour responses to mechanical stress to improve cancer treatment. 
Mathematical modelling represents a natural framework within which to address such questions. 
 
There is a vast literature devoted to mathematical models of the growth and response to treatment of solid tumours (see \cite{bull2022hallmarks, flegg2019mathematical, mathur2022optimizing, IMA::Byrne20061563, IMA::Roose2007179} for recent reviews). 
For example, a variety of mathematical frameworks have been used to study the growth of multicellular tumour spheroids, ranging from 
agent-based models~\cite{bull2020mathematical}, 
to time-dependent systems of ordinary differential equations~\cite{wallace2013properties}, and partial differential equations~\cite{gatenby1996}, including novel classes of free boundary problems~\cite{greenspan1972,Friedman:SIAM:2003, Friedman:2003, Friedman:2005,Friedman:2013}. The majority of these mathematical models have focussed on the impact that nutrient availability has on tumour growth and responses to 
treatment with radiotherapy and chemotherapy. However, a small number of models have studied the impact of mechanical stimuli on 
tumour growth~\cite{Byrne2003a, Ambrosi2004, Ambrosi2017}, including
several specialised to  understand the experimental results obtained by Helminger \emph{et al.}~\cite{Helmlinger1997,IMA::Chen2001,Roose:2003,Yan:2021}.
A common feature of the latter models is their treatment of the hydrogel as a hyperelastic material that restrains the tumour's growth.
The models differ in the constitutive assumptions that they use to model the tumour:
both Chen \emph{et al.}~\cite{IMA::Chen2001} and Roose \emph{et al.}~\cite{Roose:2003} view the tumour as a two-phase mixture
while Yan \emph{et al.}~\cite{Yan:2021} view it as a neo-Hookean elastic material. 
In~\cite{IMA::Chen2001} the tumour is viewed as a mixture of two inviscid fluids, cell proliferation and death are limited by local nutrient levels and necrosis is 
initiated when the pressures in the cell and fluid phases are equal. By contrast,
in~\cite{Roose:2003} the tumour is viewed as a poroelastic material with a solid, cellular phase and an inviscid fluid phase, cell proliferation and  death
are regulated by nutrient availability and also the local cell stress, and necrosis is neglected. 

In this paper, we introduce a new model to describe Helmlinger \emph{et al.}'s experiments, aiming to strike a balance
between the phenomenological approach used in~\cite{IMA::Chen2001} and the more detailed, poroelastic approach
used in~\cite{Roose:2003}. Following ~\cite{IMA::breward_2002}, we view the tumour as a two-phase mixture, 
consisting of a viscous tumour cell phase and an inviscid fluid phase.  
Our approach is flexible and builds naturally on the existing literature on multiphase models of solid tumour growth~\cite{Roose:2003,Ambrosi2017,Lemonetal,Byrne2003a}.
As such, it is readily extendable in that the assumptions that underpin both the tumour and 
hydrogel sub-models can be changed while maintaining the same general model framework. 

The remainder of the paper is organised as follows.  
The mathematical model is derived and cast in dimensionless form in Section~\ref{sec:mathematical_model}.  
Numerical solutions are presented in Section~\ref{sec:numerical_solutions}
In Section~\ref{sec:tumour_decay} we derive a simplified version of the model which we use to investigate how the mechanical properties of the hydrogel impact the tumour's growth dynamics and, in particular, to identify a critical value of the hydrogel stiffness above which tumour elimination is predicted and below which the tumour evolves to a non-trivial steady state. 
The paper concludes in Section~\ref{sec:discussion} where we discuss our findings and identify possible directions for future work.

 \section{Model development}
\label{sec:mathematical_model}
We view the tumour and hydrogel as two continuous materials, separated by a dynamic interface, referred to as the tumour boundary (see Figure~\ref{fig:tum_schema}).
In what follows,  we derive equations that describe how the tumour's size, composition,
and the position of the tumour boundary evolve over time and how these changes are coupled to the deformation of the hydrogel. For simplicity, we consider the 1D Cartesian geometry shown in Figure~\ref{fig:tum_schema}. %
The equations for the hydrogel and tumour are presented in
Subsections~\ref{sec:deform} and~\ref{sec:tumour_mod}, respectively, while the initial and boundary conditions are presented in Subsection~\ref{sec:initial_bouundary}. We non-dimensionalise the governing equations and discuss the parameter values used for numerical simulations in Subsection~\ref{sec:dless-gel}. For reference, lists of the dependent variables and model parameters that appear in the governing equations are presented in Tables~\ref{tab:mod_var} and~\ref{tab:mod_par}, respectively. 

\begin{figure}[h!]
	\centering
	\includegraphics[scale=0.9]{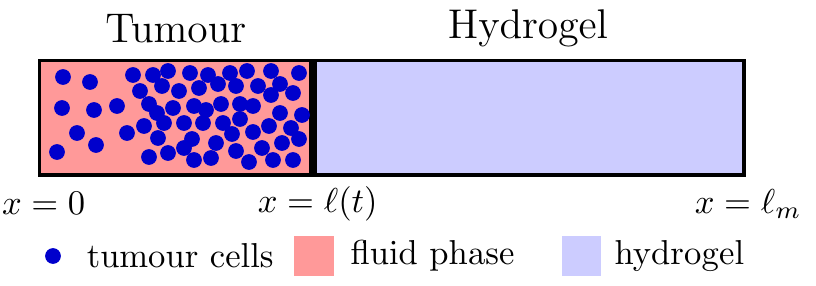}
	\caption{Schematic of the tumour and hydrogel. Here, $0  \leq x=\ell(t) \leq \ell_m$ denotes the position of the tumour-hydrogel interface at time $t$ and $\ell_m$ is the length of the domain in which the tumour and hydrogel are located.}
	\label{fig:tum_schema}
\end{figure}

\begin{table}[h!]
	\centering
	\begin{tabular}{||c|cc|c|c||}
		\hline
		\multirow{2}{*}{\textbf{}}       & \multicolumn{2}{c|}{\textbf{Parameter}}                         & \multirow{2}{*}{\textbf{Description}} & \multirow{2}{*}{\textbf{Dimension}} \\ \cline{2-3}
		& \multicolumn{1}{c|}{\textbf{Cell phase}} & \textbf{Fluid phase} &                                       &                                     \\ \hline \hline
		\multirow{5}{*}{\textbf{Tumour}} & \multicolumn{1}{c|}{$\alpha$}               & $\beta$                 & Volume fraction                       &    $1$                                 \\  
		& \multicolumn{1}{c|}{$u_{\alpha}$}            & $u_{\beta}$          & Velocity                              &    $\mathrm{cm}\cdot \mathrm{s}^{-1}$                                 \\ 
		& \multicolumn{1}{c|}{$p_{\alpha}$}            & $p_{\beta}$              & Pressure                              &       $\mathrm{g\cdot cm^{-1} \cdot s^{-2}}$                              \\
		& \multicolumn{1}{c|}{$\sigma_{\alpha}$}        & $\sigma_{\beta}$          & Stress                                &   $\mathrm{g\cdot cm^{-1} \cdot s^{-2}}$                                   \\ \cline{2-5} 
		& \multicolumn{2}{c|}{$c$}                                          & Nutrient concentration                &        $\mathrm{g\cdot cm^{-1} }$                              \\ \hline \hline
		\textbf{Hydrogel}                & \multicolumn{2}{c|}{$\sigma^{\rm H}$}                                   & Hydrogel stress                       &        $\mathrm{g\cdot cm^{-1} \cdot s^{-2}}$                              \\ \hline
	\end{tabular}
	\caption{Summary of the dependent variables used to describe the tumour and hydrogel.}
	\label{tab:mod_var} 
\end{table}

\begin{table}[h!]
	\centering
	\begin{tabular}{||c|c|c|c||}
		\hline
		& \multirow{1}{*}{\bf Parameter} & \multirow{1}{*}{\bf Description} & \multirow{1}{*}{\bf Dimension} \\ 
		\hline  \hline
		\multirow{4}{*}{\bf Tumour} 	& $\alpha^\ast$ & Cell packing density &  1 \\ 
		& $\mu_{\alpha}$ &Cell viscosity &  $\rm g\cdot cm^{-1}\cdot s^{-1}$ \\ 
		& $\gamma$ &Intracellular force coefficient  & $\rm g\cdot cm^{-1}\cdot s^{-2}$ \\  
		& {$k,k_1$} & Traction coefficients  & $\rm g\cdot cm^{-3}\cdot s^{-1}$\\ \hline \hline 
		\multirow{1}{*}{\bf Hydrogel} 
		& {$\vartheta,\,\nu$} & Compressibility parameters & $\rm g\cdot cm^{-1}\cdot s^{-2}$ \\ \hline 
	\end{tabular}
	\caption{Summary of the model parameters used to describe the mechanical properties of the tumour and hydrogel.}
	\label{tab:mod_par} 
\end{table}

\subsection{The hydrogel model}
\label{sec:deform}

The hydrogel surrounding the tumour is viewed as a hyperelastic material which deforms as the tumour expands. The mechanical stress that develops within the hydrogel acts on the tumour, inhibiting its growth. The dynamics of the tumour and hydrogel are coupled by imposing continuity of stress on the tumour boundary, $x = \ell(t)$ (see Subsection~\ref{sec:initial_bouundary}). 

In what follows, it is convenient to introduce 
$X \in (\ell(0),\ell_m)$ and $x \in (\ell(t),\ell_m)$ to represent the spatial coordinates within the undeformed (at $t=0)$ and deformed ($t > 0$) hydrogel. The \emph{deformation map} $\chi(t,X)$ maps points
$X \in (\ell(0),\ell_m)$ in the undeformed configuration to points $x \in (\ell(t),\ell_m)$ in the deformed configuration.

When describing the deformation of the hydrogel, we neglect inertial effects and assume that there are no external body forces. 
If we denote by 
$\sigma^{\rm H}$ the first Piola--Kirchoff stress tensor of the hydrogel, then the force balance on the hydrogel supplies: 
\begin{align} \label{eqn:gel_move}
\dfrac{\partial \sigma^{\rm H}}{\partial X} (t,X) = 0\;\; \text{ for} \; X \in (\ell(0),\ell_m) ;\;\text{ and }\; t > 0.
\end{align}
We denote by $G(t,X)$ the deformation gradient where 
$G(t,X) = \frac{\partial \chi}{\partial X}(t,X)$ for $X \in (\ell(0),\ell_m)$ and $t >0$. 
The first Piola-Kirchoff stress tensor is defined in terms of the strain energy density $\mathscr{W}_G$ via
$\sigma^{\rm H} = \partial \mathscr{W}_G/\partial G$, where $\mathscr{W}_G$ is the work done per unit volume in deforming the hydrogel. A wide range of strain energy density functionals have been proposed in the literature \cite{BERGSTROM2015209,YEOH1989425}. We follow Flory~\cite{flory1953, hong2008} and consider the following strain energy density:
\begin{equation}
\mathscr{W}_G =   \dfrac{\nu}{2} G^2 - \dfrac{3}{2}Nk_{\mathrm{B}}\mathfrak{T}_{\mathrm{abs}} -  \vartheta \log\left(G \right).
\label{eqn:flory}
\end{equation}
In Equation (\ref{eqn:flory}), $\nu$ and $\vartheta$ are nonnegative constants with the dimension of stress ($\rm g\cdot cm^{-1}\cdot s^{-2}$), $N$ is the ratio of the number of hydrogel polymer chains to the volume of the hydrogel in a dry state, $k_{\mathrm{B}}$ is the Boltzmann constant, and $\mathfrak{T}_{\mathrm{abs}}$ is the absolute temperature which is assumed constant.  
%
With $\sigma^\mathrm{H} := \frac{\partial  \mathscr{W}_G}{\partial G}$ and $\mathscr{W}_G$ defined by Equation~\eqref{eqn:flory}, it follows that
\begin{align}
\sigma^{\mathrm{H}} =   \nu\,G -  \vartheta G^{-1}.
\label{eqn:first_piola}
\end{align}

In what follows, for physically realistic solutions, we assume that $\vartheta \ge \nu$. Under this assumption, $\sigma^{\mathrm{H}} < 0$ and the hydrogel always exerts a compressive stress on the tumour (see Figure~\ref{fig:sighneg} where $\vartheta \ge \nu$ and $\sigma^{\mathrm{H}}<0$). If $\nu > \vartheta$, then $\sigma^{\mathrm{H}} > 0$ for $\ell(t) < \ell_m - (\ell_m - \ell_0)\sqrt{\vartheta/\nu} = \ell^\ast$ (see Figure~\ref{fig:sighpos}), and the hydrogel exerts a physically unrealistic, tensile stress on the tumour for $0 < \ell(t) < \ell^\ast$.  

\begin{figure}[h!]
	\centering
	\begin{subfigure}[c]{0.48\textwidth}
		\centering
		\caption{}
		\includegraphics[scale=0.8]{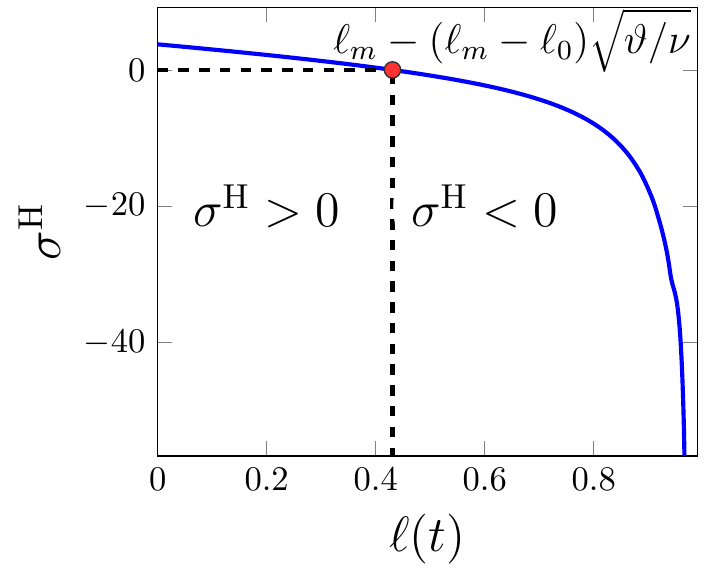}
		\label{fig:sighpos}
	\end{subfigure}
	\begin{subfigure}[c]{0.48\textwidth}
		\centering
		\caption{}
		\includegraphics[scale=0.8]{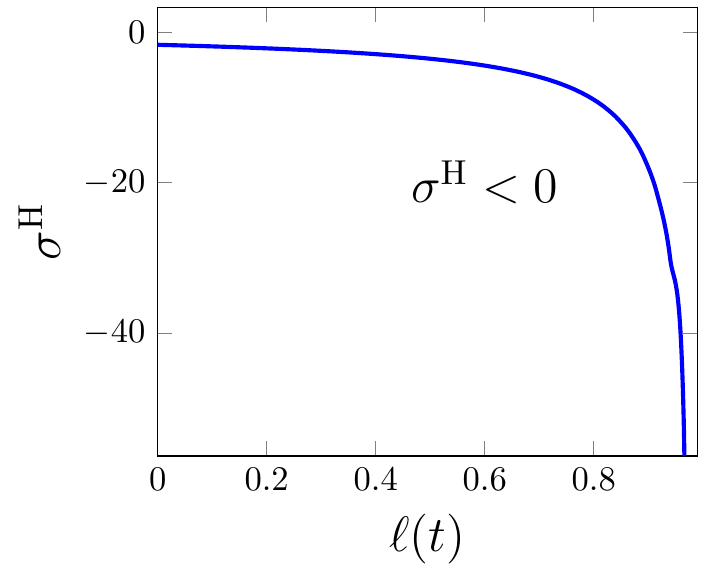}
		\label{fig:sighneg}
	\end{subfigure}
	\caption{Variation of $\sigma^{\mathrm{H}}$ with $\ell(t) \in (0,1)$. In (a), $\vartheta < \nu $ ($\vartheta = 2.0$, $\nu = 5.0$) and $\sigma^{\rm H}$ is positive (physically unrealistic) at values of $\ell(t)$ to the left of the vertical dotted line, on which 
		$\ell(t) = \ell_m - (\ell_m - \ell_0)\sqrt{\vartheta/\nu}$. In (b), $\vartheta > \nu$ ($\vartheta = 2.0$, $\nu = 0.1$) and $\sigma^{\rm H} < 0$ for all values of $\ell(t)$.}
\end{figure}

\noindent Using Equation~\eqref{eqn:first_piola} to substitute for $\sigma^\mathrm{H}$ in Equation~\eqref{eqn:gel_move}, we obtain
\begin{align}
\nu\dfrac{\partial^2 \chi}{\partial X^2} + \vartheta\left(\dfrac{\partial \chi}{\partial X} \right)^{-2} \dfrac{\partial^2 \chi}{\partial X^2} = 0.
\label{eqn:gel_kine}
\end{align}

\noindent Since $\frac{\partial \chi}{\partial X}$ is the ratio of two infinitesimal lengths and the linear ordering of the material in the hydrogel is preserved under any admissible transformations (compression and stretching), we impose the auxiliary condition $\frac{\partial \chi}{\partial X} > 0$. 
At each time $t >0$, the boundary points $X=\ell_0$ and $X=\ell_m$ are mapped to $x=\ell(t)$ and $x=\ell_m$. Therefore, we impose the following boundary conditions to close Equation~\eqref{eqn:gel_kine}:
\begin{align}
\chi(t,\ell_0) = \ell(t) \;\text{ and }\; \chi(t,1) = \ell_m.
\label{eqn:bdry_cds_gel}
\end{align}
With $\dfrac{\partial \chi}{\partial X} > 0$, Equations~\eqref{eqn:gel_kine} and ~\eqref{eqn:bdry_cds_gel} admit a unique, linear solution of the form: 
\begin{align}
\chi(t,X) = \left ( \frac{\ell_m - \ell(t)}{\ell_m - \ell_0} \right) X + \left ( \frac{\ell(t) - \ell_0}{\ell_m - \ell_0} \right).
\label{eqn:spatial_deform}
\end{align}
Recalling that $G=\frac{\partial \chi}{\partial X}(t,X)$, we now substitute  Equation~\eqref{eqn:spatial_deform} in Equation~\eqref{eqn:first_piola} to obtain the following expression for $\sigma^{\mathrm{H}}$, the stress in the hydrogel:
\begin{align}
\sigma^{\mathrm{H}}(t,X) =  \nu \left ( \dfrac{\ell_m - \ell(t)}{\ell_m - \ell_0}  \right ) - \vartheta \left(\dfrac{\ell_m - \ell_0}{\ell_m - \ell(t)} \right), \quad \mbox{for} \;\; X \in (\ell(0),\ell_m).
\label{eqn:gel_stress1}
\end{align}

\begin{remark}[Spatial variation of the stress in the hydrogel] 
	From Equation~\eqref{eqn:gel_stress1}, it is clear that the stress $\sigma^{\mathrm{H}}$ depends implicitly on time, via the tumour length $\ell(t)$. 
	The stress $\sigma^{\mathrm{H}}$ is independent of spatial position $X$ because 
	the deformation in the hydrogel depends linearly on $X$ (see Equation~\eqref{eqn:spatial_deform}).
	In Sections~\ref{sec:numerical_solutions} and~\ref{sec:tumour_decay} we show that when the stress in the hydrogel is defined by
	Equation~\eqref{eqn:gel_stress1}, the analysis and numerical solution of
	the model simplify greatly. 
\end{remark}

\subsection{The two-phase model of tumour growth}
\label{sec:tumour_mod}

The tumour is viewed as a two-phase mixture of viscous tumour cells and inviscid fluid, embedded within a deformable hydrogel that acts as an external source of vital nutrients,
here taken to be oxygen. 
As indicated in Table~\ref{tab:mod_var}, we associate with the tumour cell and fluid phases, volume fractions $\alpha$ and $\beta$ (dimensionless),  velocities $u_\alpha$ and $u_\beta$ ($\rm cm\cdot s^{-1}$), pressures $p_\alpha$ and $p_\beta$, and stress tensors $\sigma_\alpha$ and $\sigma_\beta$ ($\rm g\cdot cm^{-1} \cdot s^{-2}$). We note that, for 1D Cartesian geometry considered here, the dependent variables are scalars which depend on spatial position $0 \leq x \leq \ell(t)$ and time $t > 0$.  

Conservation of mass applied to the tumour cell and fluid phases supplies
\begin{align}
\dfrac{\partial \alpha}{\partial t} + \dfrac{\partial }{\partial x}(u_\alpha \alpha) = \mathscr{F}_\alpha(\alpha,c)\;\;\text{ and }\;\;\dfrac{\partial \beta}{\partial t} + \dfrac{\partial}{\partial x}(u_\beta \beta) = -\mathscr{F}_\alpha(\alpha,c),
\label{eqn:vf}
\end{align}
where $\mathscr{F}_\alpha$ and $\mathscr{F}_\beta$ are the net production rates of tumour cells and fluid respectively and $c$ ($\rm g\cdot cm^{-1}\cdot s^{-1}$)  is the concentration of an externally supplied nutrient, here taken to be oxygen. We assume that there are no external sources or sinks of mass within the tumour, so that mass is converted from one phase to the other ({\em i.e.,} $\mathscr{F}_\alpha = - \mathscr{F}_\beta$). We assume further that there are no voids within the tumour so that
\begin{equation}
\alpha + \beta = 1 \quad \mbox{for} \; 0 \leq x \leq \ell(t). 
\label{eqn:no_voids}
\end{equation}
Following \cite{IMA::breward_2002}, we implement the following functional form for $\mathscr{F}_\alpha$:
\begin{equation}
\mathscr{F}_\alpha(\alpha,c) = \underbrace{\frac{S_0 c}{1 + S_1c)}}_{\equiv \, b(c)} \:  \alpha (1 - \alpha) -  \underbrace{\frac{(S_2 + S_3 c)}{(1 + S_4c)}}_{\equiv \, d(c)} \: \alpha,    
\end{equation}
where $b(c)$ and $d(c)$ represent the birth and death rates of the tumour
cells, and $S_0\,({\rm g^{-1}\cdot cm})$, $S_1\,({\rm g^{-1}\cdot cm\cdot s})$, $S_2 \,({\rm s^{-1}})$, $S_3\,({\rm g^{-1}\cdot cm})$ and $S_4 \,({\rm g^{-1}\cdot cm\cdot s})$ are positive parameters whose default values are presented in Table~\ref{tab:model_param}.
We assume that the oxygen concentration satisfies a reaction-diffusion equation of the form
\begin{align}
\label{eqn:ot}
\dfrac{\partial c}{\partial t} = \eta \dfrac{\partial^2 c}{\partial x^2}  -\dfrac{Q_0\alpha c}{1 + Q_1 c},
\end{align}
where the positive constant $\eta$ represents the oxygen diffusion coefficient, and the positive constants $Q_0$ and $Q_1$ describe how the rate at which tumour cells consume oxygen increases as oxygen levels increase. Since the time--scale of tumour cell proliferation is typically much larger that that of oxygen diffusion, henceforth, we make a quasi-steady state approximation, wherein $\partial c/\partial t \approx 0$~\cite{ward_1,IMA::breward_2002,Byrne_prezziozi_2003,Byrne2003a} and   Equation~\eqref{eqn:ot} reduces to give
\begin{align}
\label{eqn:ot_QS}
0 = \eta \dfrac{\partial^2 c}{\partial x^2} - \dfrac{Q_0\alpha c}{1 + Q_1 c}.
\end{align}

\medskip
Applying momentum balances to the tumour cell and fluid phases, and neglecting inertial effects, we have
\begin{align}
\dfrac{\partial (\alpha \sigma_{\alpha})}{\partial x} + F_{\alpha\beta} = 0\;\;\text{ and }\;\;  \dfrac{\partial (\beta \sigma_{\beta})}{\partial x} + F_{\beta\alpha} = 0, 
\label{eqn:bal_momen}
\end{align}
where $F_{\alpha\beta}$ and $F_{\beta\alpha}$ represent the forces exerted by the fluid phase on the cells, and vice versa and, in the absence of external forces, $F_{\alpha\beta} = -F_{\beta\alpha}$. We assume that 
$F_{\alpha\beta}$ comprises two terms: a drag term due to relative motion of the two phases, of the form $k_1 \alpha \beta (u_\beta - u_\alpha)$,
where $k_1$ ($\rm g\cdot cm^{-3}\cdot s^{-1}$) is a positive constant; and, an interfacial force of the form $p_{\beta} \partial \alpha/ \partial x$. 
Combining these assumptions, we have that
\begin{equation}  \label{eqn:Fab}
F_{\alpha\beta} \equiv - F_{\beta \alpha} = k_1 \alpha \beta (u_\beta - u_\alpha) + p_{\beta} \frac{\partial \alpha}{ \partial x}.
\end{equation}

Since the fluid phase is inviscid and the cell phase is viscous, the stresses $\sigma_\alpha$ and $\sigma_\beta$ are 
\begin{align}
\sigma_{\alpha} = -p_{\alpha}+ 2\mu_{\alpha} \dfrac{\partial u_{\alpha}}{\partial x} \;\;
\text{ and }\;\;
\sigma_{\beta} = -p_{\beta},
\label{eqn:stress_tumour}
\end{align}
where $\mu_{\alpha}$ ($\rm g\cdot cm^{-1}\cdot s^{-1}$) is the coefficient of viscosity.

Adding the two equations in~\eqref{eqn:bal_momen}, and substituting for $\sigma_{\alpha}$ and $\sigma_{\beta}$ from~\eqref{eqn:stress_tumour}, it follows that
\begin{align}
2\mu_{\alpha} \dfrac{\partial }{\partial x} \left(\alpha \dfrac{\partial u_{\alpha}}{\partial x}\right) = \dfrac{\partial }{\partial x}\left(\alpha p_{\alpha} + \beta p_{\beta} \right).
\label{eqn:vel_simp1}
\end{align}
Following \cite{IMA::breward_2001}, we view the cells as bags of water, with an additional pressure to account for cell--cell interactions. Under these assumptions, we have that
\begin{equation} \label{eqn:palpha}
p_\alpha = p_{\beta} + 
\underbrace{\gamma\dfrac{(\alpha - \alast)^{+}}{(1 - \alpha)^2}}_{\equiv \, \Sigma(\alpha)},
\end{equation}
where, $\Sigma(\alpha)$ ($\rm g\cdot cm^{-1} \cdot s^{-2}$) quantifies the pressure due to cell--cell interactions, and $\gamma$ ($\rm g\cdot cm^{-1} \cdot s^{-2}$)  and $0 < \alast < 1$ (dimensionless) are positive constants and $x^{+} := \max(x,0)$ for $x \in \mathbb{R}$. If $0 < \alpha < \alpha^{\ast}$, then the cells are too sparse to experience any interactions and $\Sigma(\alpha) = 0$; if $\alpha^{\ast} < \alpha < 1$, then the cells are densely packed and repel each other, and $\Sigma(\alpha) > 0$. Additional mechanisms, including chemotaxis~\cite{ByrneOwen,Lemonetal}, can be incorporated into the functional form of $\Sigma$. Here, for simplicity, we focus on cell-cell interactions.  

If we substitute $p_\alpha = p_{\beta} + \Sigma(\alpha)$	in Equation~\eqref{eqn:vel_simp1}, and use the no voids relation ($\alpha + \beta  = 1$), then we obtain the following partial differential equation for $u_\alpha$:
\begin{align}
2\mu_{\alpha} \dfrac{\partial }{\partial x} \left(\alpha \dfrac{\partial u_{\alpha}}{\partial x}\right) = \dfrac{\partial p_{\beta}}{\partial x} + \dfrac{\partial}{\partial x} (\alpha \Sigma(\alpha)).
\label{eqn:vel_simp2}
\end{align}
In order to arrive at our final model, it remains to derive expressions for $p_\beta$ and $u_\beta$ in terms of $\alpha$ and $u_\alpha$. Adding Equations~\eqref{eqn:vf}, we deduce that $\frac{\partial}{\partial x} (\alpha u_{\alpha} + \beta u_{\beta}) = 0$. 
Integrating this identity with respect to $x$, and assuming that the tumour is symmetric about the tumour centre, $x=0$, so that $u_{\alpha}(0,t) = 0 = u_{\beta}(0,t)$, we have that 
\begin{equation}
u_{\beta} = -\dfrac{\alpha u_{\alpha}}{(1 - \alpha)}.
\label{eqn:ubeta}
\end{equation}
Using Equations~\eqref{eqn:Fab} and \eqref{eqn:ubeta} to substitute for $F_{\beta\alpha}$ and $u_\beta$, and noting that $\sigma_{\beta} = -p_{\beta}$,   Equation~\eqref{eqn:bal_momen} reduces to give the following expression for $p_\beta$: 
\begin{align}
\dfrac{\partial p_{\beta}}{\partial x} = \dfrac{k_1 \alpha}{1 - \alpha} u_{\alpha}.
\label{eqn:p_beta}
\end{align}
We substitute from Equation~\eqref{eqn:p_beta} into Equation~\eqref{eqn:vel_simp2} to obtain the following partial differential equation for $u_\alpha$:
\begin{align}
2\mu_{\alpha} \dfrac{\partial }{\partial x} \left(\alpha \dfrac{\partial u_{\alpha}}{\partial x}\right) = \dfrac{k_1 \alpha}{1 - \alpha} u_{\alpha} + \dfrac{\partial}{\partial x} (\alpha \Sigma(\alpha)).
\label{eqn:vel_simp3}
\end{align}
The tumour boundary $x = \ell(t)$ marks the interface between the tumour and the hydrogel.  We assume that it moves with the local cell velocity there so that
\begin{equation}
\frac{\mathrm{d} \ell (t)}{\mathrm{d}t} = u_{\alpha}(t,\ell(t)).
\label{bdrequation}
\end{equation}

\noindent \textbf{Reduced model:} Our two-phase model of tumour growth reduces to three partial differential equations for the cell volume fraction ($\alpha$), cell velocity ($u_{\alpha}$), nutrient concentration ($c$), and an ordinary differential equation for the tumour length ($\ell(t))$: 
\begin{align} \label{eq.td1}
\dfrac{\partial \alpha}{\partial t} + \dfrac{\partial }{\partial x}(u_\alpha \alpha) ={}& b(c)  \alpha (1 - \alpha)- d(c) \alpha,\\  \label{eq.td2}
2\mu_{\alpha} \dfrac{\partial }{\partial x} \left(\alpha \dfrac{\partial u_{\alpha}}{\partial x}\right) ={}& \dfrac{k_1 \alpha}{1 - \alpha} u_{\alpha} + \dfrac{\partial}{\partial x} (\alpha \Sigma(\alpha)), \\  \label{eq.td3}
\eta \dfrac{\partial^2 c}{\partial x^2}  ={}& \dfrac{Q_0\alpha c}{1 + Q_1 c},\;\;	\text{ and } \\
\frac{\mathrm{d} \ell(t)}{\mathrm{d}t} ={}& u_{\alpha}(t,\ell(t)).
\end{align}
The variables $\beta$ and $u_{\beta}$ are obtained from the relations $\alpha + \beta  = 1$ and $u_{\beta} = -\alpha u_{\alpha}/(1 - \alpha)$, while $p_\alpha$ 
and $p_\beta$ are defined by
Equations~\eqref{eqn:palpha} and~\eqref{eqn:p_beta}, and $\sigma_\alpha$ and $\sigma_\beta$ are defined by
Equation~\eqref{eqn:stress_tumour}.

\subsection{Initial and boundary conditions}
\label{sec:initial_bouundary}
The model is closed by imposing appropriate initial and boundary conditions. 
The initial tumour length and volume fraction are
\begin{equation} \label{eqn:ini}
\ell(0) = \ell_{\rm in} \;\text{ and }\;\alpha(0,x) = \alpha_{\rm in}(x) \quad \mbox{for} \;\; x \in (0,\ell_{\rm in}),
\end{equation}
where $0 < \ell_{\rm in} < \ell_{\rm m}$ and $0 < \alpha_{\rm in}(x) < 1$. 
To ensure symmetry about the tumour centre, we impose the following boundary conditions:
\begin{equation}
u_{\alpha}(t,0) = 0 \;\; \mbox{and} \;\;  \frac{\partial c}{\partial x}(t,0) = 0.
\label{eqn:bc_x0}
\end{equation}
We assume that the fluid flows freely across the tumour  boundary and, hence,  fix $p_{\beta} (t, \ell(t)) = 0$. 
With $\sigma^{\mathrm{H}}$ defined by Equation~\eqref{eqn:gel_stress1} and
$p_{\beta} (t, \ell(t)) = 0$, continuity of stress at the tumour--hydrogel interface then supplies, 
\begin{align} \label{eqn:stress_cont}
\sigma_{\alpha} ( t, \ell(t)) = 2 \mu_{\alpha} \dfrac{\partial u_{\alpha}}{\partial x}(t,\ell(t)) - \Sigma({\alpha}(t,\ell(t))) = \sigma^{\mathrm{H}}(t, \ell(t))
\;\;\text{ for }\;\; t > 0.
\end{align}
We suppose that the nutrient concentration is maintained at a constant value of $c_{\rm out}  > 0$ on the tumour boundary so that 
\begin{equation} \label{eqn:bc_nutrlt}
c(t,\ell(t)) = c_{\rm out}\;\; \text{ for } t > 0. 
\end{equation}

\begin{figure}[htp]
	\centering
\includegraphics[scale=0.7]{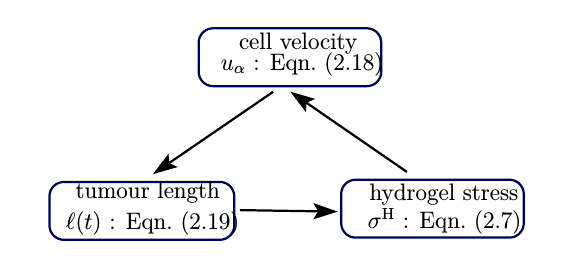}
	\caption{Schematic diagram showing how the stress in the hydrogel, $\sigma^{\mathrm{H}}(t)$, the tumour cell velocity, $u_{\alpha}(t,\ell(t))$, and the tumour size, $\ell(t)$, are coupled. }
	\label{fig:coupling}
\end{figure}

\noindent {\em Model summary.} The  tumour dynamics are governed by Equations~\eqref{eq.td1},~\eqref{eq.td2}, and~\eqref{eq.td3}. The initial conditions are specified by Equation~\eqref{eqn:ini} and the boundary conditions are given by Equations~\eqref{eqn:bc_x0},~\eqref{eqn:stress_cont}, 
and~\eqref{eqn:bc_nutrlt}. The boundary condition \eqref{eqn:stress_cont} couples the tumour dynamics to $\sigma^{\mathrm{H}}$, the stress in the hydrogel, which is defined via Equation~\eqref{eqn:gel_stress1}.

\subsection{Dimensionless model}
\label{sec:dless-gel}

Before proceeding, it is convenient to recast our model in dimensionless form. We rescale time $t$ with $t_{\mathrm{dim}}$, a typical timescale for tumour cell proliferation; here we fix $t_{\mathrm{dim}} = (1 + S_1c_{\mathrm{out}})/S_0c_{\mathrm{out}}$, which is the timescale for cell proliferation when the nutrient concentration is at its maximum value. 
We map the spatial domain onto the unit interval by rescaling the spatial coordinate, $x$, and the position of the tumour-hyrdogel interface, $\ell(t)$, with $\ell_{\mathrm{dim}} = \ell_m$, the size of the domain. For completeness, in Table~\ref{tab:dless_quan} we state the dimensionless variables induced by this rescaling, using primes to denote dimensionless quantities. 

\begin{table}[htp]
	\centering
	\begin{tabular}{||c||c||}
		\hline 
		{\bf Dimensional quantities}	& {\bf Dimensionless quantities} \\ \hline \hline
		Space, $x$, and time, $t$ & $x' = x/\ell_{\mathrm{dim}},\;\;t' = t/t_{\mathrm{dim}}$ \\ \hline 
		Cell volume fraction, $\alpha$ & $\alpha' = \alpha$ \\ \hline 
		Cell velocity, $u_\alpha$ & $\;u_{\alpha}' = \dfrac{t_{\mathrm{dim}}}{\ell_{\mathrm{d}}}u_\alpha$. \\ \hline
		Nutrient concentration, $c$ & $c' = c/c_{\mathrm{out}}$ \\ \hline 
		Tumour length, $\ell$ & $\ell' = \ell/\ell_{\mathrm{dim}}$  
		\\ \hline
		Cell and fluid stresses, $\sigma_{\alpha}$
		and $\sigma_\beta$ & $\sigma_{\alpha}'= \sigma_\alpha/\gamma$ and $\sigma_{\beta}'= \sigma_\beta/\gamma$\\ \hline
		Cell and fluid pressures, $p_\alpha$ and $p_\beta$& $p_{\alpha}'= p_\alpha/\gamma$ and $p_{\beta}'= p_\beta/\gamma$\\ \hline 
		Hydrogel stress, $\sigma^{\mathrm{H}}$ & $\sigma^{\mathrm{H}'}= \sigma^{\mathrm{H}}/\gamma$ \\ \hline
	\end{tabular}
	\caption{Summary of the dimensionless model variables.}
	\label{tab:dless_quan}
\end{table}

Under this rescaling, the equations governing the tumour-hydrogel dynamics transform as follows:

\begin{subequations} \label{cont_model}
	\begin{align}
	\label{eqn:vf_dless}
	\dfrac{\partial \alpha'}{\partial t'} + \dfrac{\partial }{\partial x'}(u_{\alpha}' \alpha') ={}& \alpha' 
	(1 - \alpha') \dfrac{(1 + s_1')c'}{1 + s_1'c'} - \alpha' \dfrac{s_2' + s_3'c'}{1+ s_4'c'}, \\
	\label{eqn:cv_dless}
	- \mu \dfrac{\partial }{\partial x'} \left(\alpha' \dfrac{\partial u_{\alpha}'}{\partial x'}\right) + \dfrac{k \alpha' u'}{1 - \alpha'}  ={}&  -\dfrac{\partial}{\partial x'} \left(\alpha' \dfrac{(\alpha' - \alast)^{+}}{(1 - \alpha')^2} \right), \\
	\eta'  \dfrac{\partial^2 c'}{\partial x'^2} ={}& \dfrac{Q\alpha' c'}{1 + \widehat{Q}_1 c'}, \;\;\text{ and }
	\label{eqn:ot_dless} \\
	\label{eqn:bvel_dless}
	\dfrac{\mathrm{d}\ell'}{\mathrm{d}t'} ={}& u_{\alpha'}(t',\ell'(t')), 
	\end{align}
	\noeqref{eqn:bvel_dless} subject to the initial and boundary conditions
	\begin{gather}
	{\alpha}'(0,x') = \alpha_{\rm in}'(x') \;\forall x' \in (0,\ell_{\rm in}'),\;\;{\ell}'(0) = \ell_{\rm in}',
	\label{eqn:in_cond}
	\end{gather}
	\begin{gather}
	\label{eqn:bdr_cond_1}
	{u}_{\alpha}'(t',0) = 0,\;\dfrac{\partial {c}'}{\partial x'}(t',0) = 0, \\ 	\label{eqn:bdr_cond_2}
	\mu \dfrac{\partial u_{\alpha}'}{\partial x'}(t',{\ell'}(t')) - \dfrac{(\alpha'(t',\ell'(t')) - \alast)^{+}}{(1 - \alpha'(t',\ell'(t')))^2} = \sigma^{\mathrm{H}}\; \text{ and }\;{c'}(t',{\ell'}(t')) = 1,  	\end{gather}
	\label{eqn:model}
	\noeqref{eqn:cv_dless,eqn:stress_hydrogel,eqn:in_cond,eqn:bdr_cond_1,eqn:bdr_cond_2}
	
	\noindent where 
	
	\begin{equation}
	\label{eqn:stress_hydrogel}
	\sigma^{\mathrm{H}} = \nu \left(\dfrac{1 - \ell'(t')}{1 - \ell_0'} \right) - \vartheta \left(\dfrac{1 - \ell_0'}{1 - \ell'(t')} \right).\\
	\end{equation}	
\end{subequations}

\noindent In Equations~(\ref{eqn:model}), we have introduced the following dimensionless parameter groupings:  
\begin{gather}
s_1' = S_1 c_{\mathrm{out}},\;\;s_2' = t_{\mathrm{dim}} S_2,\;\; s_3' = t_{\mathrm{dim}}S_3/c_{\mathrm{out}},\;\;
s_4' = S_4 c_{\mathrm{out}}, \\
\;\;{ k = \frac{k_1\ell_{\mathrm{dim}}^2}{\gamma t_{\mathrm{dim}}}},\;\;\mu = \frac{2\mu_{\alpha}}{\gamma t_{\mathrm{dim}}}, \\
Q = Q_0t_{\mathrm{dim}},\;\;\widehat{Q}_1 = Q_1 c_{\mathrm{out}},\;\;\eta' = t_{\mathrm{dim}}\eta/\ell_{\mathrm{dim}}^2, \\
\nu' = \nu/\gamma,\;\;\vartheta'= \vartheta/\gamma, \ell_{m}' = \ell_m/\ell_{\mathrm{dim}} \text{ and }\ell_{\rm in}' = \ell_{\rm in}/\ell_{\mathrm{dim}}.
\end{gather}
For notational ease, the prime symbols are dropped in the sequel. 

\subsection{Numerical Method}
Equations~\eqref{eqn:vf_dless}--\eqref{eqn:ot_dless} are defined on a time--dependent domain, with $0 \leq x \leq \ell(t)$. In order to construct numerical solutions to the governing equations, we use the scaling $\xi = x/\ell(t)$ to map the time-dependent domain $[0,\ell(t)]$ to the unit interval, $[0,1]$.    Similar coordinate transformations have been used by other authors to solve similar free boundary problems numerically (see~\cite{ward_1,yao} and the references therein).
Under this scaling, the model equations become
\begin{subequations}
	\label{eqn:fixed_prob}
	\begin{align}
	\label{eqn:df_cvf}
	\dfrac{\partial \alpha}{\partial t} - \dfrac{\xi}{\ell} \dfrac{\mathrm{d}\ell}{\mathrm{d} t} \dfrac{\partial \alpha}{\partial \xi}  + \dfrac{1}{\ell} \dfrac{\partial (u_{\alpha} \alpha)}{\partial \xi} ={}&  \alpha (1 - \alpha) \dfrac{(1 + s_1) c}{1 + s_1 c} - \dfrac{s_2 + s_3c}{1 + s_4c} \alpha, \\
	\label{eqn:df_cv}
	- \mu \dfrac{\partial }{\partial \xi} \left( \alpha \dfrac{\partial u_{\alpha}}{\partial \xi}\right) + \dfrac{\ell^2 k \alpha u_{\alpha}}{1 - \alpha} ={}& -\ell \dfrac{\partial}{\partial \xi}  \left ( \dfrac{(\alpha - \alpha^\ast)^{+}}{(1 - \alpha)^2} \right ),  \\
	\label{eqn:df_nc}
	\eta \dfrac{\partial^2 c }{\partial \xi^2}  ={}& \dfrac{Q \ell^2 \alpha c}{1 + \widehat{Q}_1 c}, \text{ and } \\
	\label{eqn:df-bdr}
	\dfrac{\mathrm{d} \ell}{\mathrm{d} t}  ={}& u(t,1),
	\end{align}
	\noeqref{eqn:df-bdr,eqn:df_cv}
	and the initial and boundary conditions are transformed to give
	\begin{gather}
	{\alpha}(0,\xi) = \alpha_{\rm in}(\xi)\;\forall \xi \in (0,1),\;\;{\ell}(0) = \ell_{\rm in},
	\label{eqn:transin_cond} \\
	\label{eqn:transbdr_cond_1}
	{u_{\alpha}}(t,0) = 0,\; \dfrac{\partial {c}}{\partial \xi}(t,0) = 0,\\
	\mu \dfrac{\partial {u_{\alpha}}}{\partial \xi}(t,1) = \ell(t) \left(\dfrac{(\alpha(t,1) - \alast)^{+}}{(1 - \alpha(t,1))^2}+ \sigma^{\mathrm{H}} \right),\; \text{ and }\;{c}(t,1) = 1 \quad \forall t \in (0,T), \label{eqn:transbdr_cond_2}
	\end{gather} \noeqref{eqn:transin_cond,eqn:transbdr_cond_1,eqn:transbdr_cond_2}
	where
	\begin{align}
	\label{eqn:gel}
	\sigma^{\rm H}= \nu \left(\dfrac{1 - \ell(t)}{1 - \ell_{\rm in}} \right) - \vartheta \left( \dfrac{1 - \ell_{\rm m}} {1 - \ell(t)}\right).
	\end{align}
\end{subequations}
\noindent 
We employ an upwind finite volume method to solve the hyperbolic  partial differential equation 
for the cell volume fraction, $\alpha$,  Equation~\eqref{eqn:df_cvf}; 
finite volume methods are appropriate here as they conserve mass at the discrete level~\cite{eymard}. 
Lagrange $P_1$ finite element methods~\cite{brenner2008mathematical,alexander} are used to solve the elliptic equations for the 
cell velocity, $u_\alpha$, and the nutrient, $c$,  Equation~\eqref{eqn:df_cvf} and Equation~\eqref{eqn:df_nc}. 
A backward Euler method is used to integrate the time-dependent ordinary differential for the tumour length, $\ell(t)$, 
Equation~\eqref{eqn:df-bdr}.  Further details of the numerical discretisation are omitted here. 
For formal definitions of the numerical methods and further justification of the above choices, 
we refer the interested reader to~\cite{DNR19,remesan_1}.

\subsection{Parameter Values}
In the absence of suitable experimental data with which to estimate the model parameters, values have been chosen to illustrate the range of behaviours that the model exhibits. Unless otherwise stated, we use the dimensionless parameter values in Table~\ref{tab:model_param} to generate numerical simulations of the governing equations.

\begin{table}[htp]
	\centering
	\begin{tabular}{||c|c||c | c||}
		\hline 
		{\bf Parameter} & {\bf Value} & {\bf Parameter} & {\bf Value}\\ \hline \hline
		$k$ & 1600 & $\mu$ & 1 \\ \hline
		$Q$ & 0.5 & $\widehat{Q}_1$ & 0 \\ \hline
		$s_1,\,s_4$ & 10 & 	$s_2,\,s_3$ & 0.5 \\ \hline
		$\alast$ & 0.8 & $\eta$ & 1/1600 \\ \hline
		$\alpha_0$ & 0.8 & $c_0$ & 1 \\ \hline
		$\ell_{\rm in}$ & 1/40 & - & - \\ \hline
	\end{tabular}
	\caption{Summary of the default values of the dimensionless parameters used to generate numerical simulations of Equations~\eqref{eqn:fixed_prob}.}
	\label{tab:model_param}
\end{table}

\section{Numerical results}
\label{sec:numerical_solutions}
In this section we present numerical simulations which show how the tumour's growth dynamics change as the stiffness of the hydrogel increases.
First, in Subsection~\ref{sec:free_sus}, we consider tumour growth in free suspension, wherein the effects of the hydrogel are absent. 
In Subsection~\ref{sec:small_tum}, we show how embedding the tumour in a deformable hydrogel influences its growth dynamics.

\subsection{Tumour growth in free suspension}
\label{sec:free_sus}
We simulate growth in free suspension by setting $\vartheta = 0 = \nu$ in Equation~\eqref{eqn:fixed_prob}.
In this case, $\sigma^{\mathrm{H}} = 0$ so that the tumour does not experience any mechanical resistance to its growth; 
its growth is limited by nutrient availability and the domain size (scaled to unity for the dimensionless model). 
Unless otherwise stated, all numerical results are generated using 
the dimensionless parameter values stated in Table~\ref{tab:model_param}, with simulations performed for $0 < t \leq T = 100$. 
Plots showing the time evolution of the nutrient concentration, cell volume fraction, cell velocity, and cell stress are presented
in Figure~\ref{fig:fres_sus_gr} while the corresponding time evolution of the outer tumour boundary, $x=\ell(t)$ is presented in Figure~\ref{fig:rad_time_theta}.

\begin{figure}[htp]
	\begin{minipage}{0.99\textwidth}
		\resizebox{.95\textwidth}{!}{	
			\centering
			\begin{subfigure}[t]{0.45\textwidth}
				\centering
				\caption{}
				\includegraphics[scale=0.6]{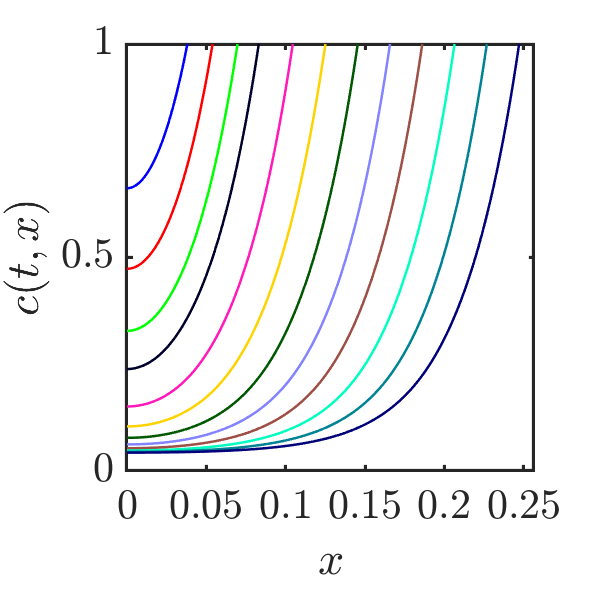}
				\label{fig:nutr00}
			\end{subfigure}
			\begin{subfigure}[t]{0.45\textwidth}
				\centering
				\caption{}
				\includegraphics[scale=0.6]{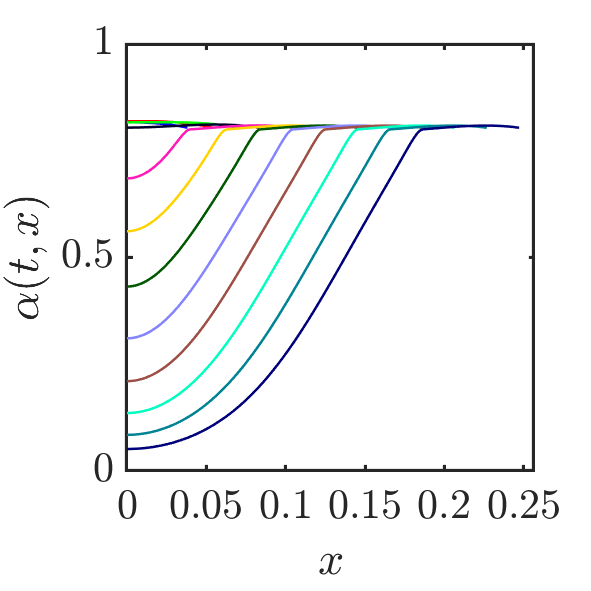}
				\label{fig:vf00}
			\end{subfigure} 
		} 
	\end{minipage}
	\begin{minipage}{0.99\textwidth}
		\resizebox{.95\textwidth}{!}{	
			\begin{subfigure}[t]{0.45\textwidth}
				\centering
				\caption{}
				\includegraphics[scale=0.6]{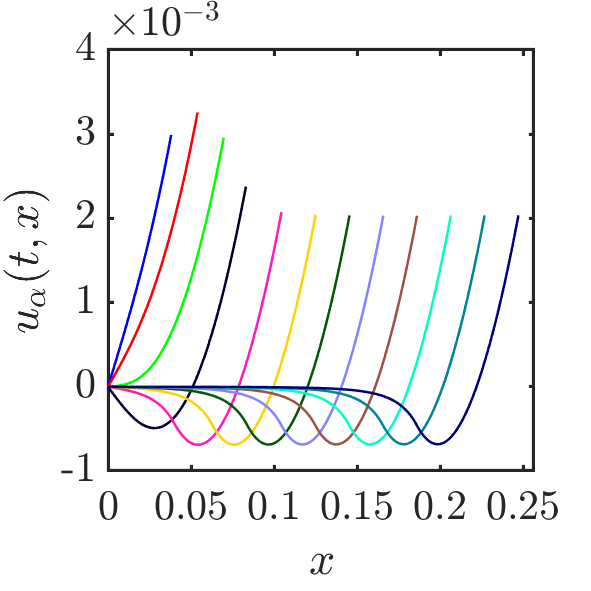}
				\label{fig:fs_cn2}
			\end{subfigure}
			\begin{subfigure}[t]{0.45\textwidth}
				\centering
				\caption{}
				\includegraphics[scale=0.6]{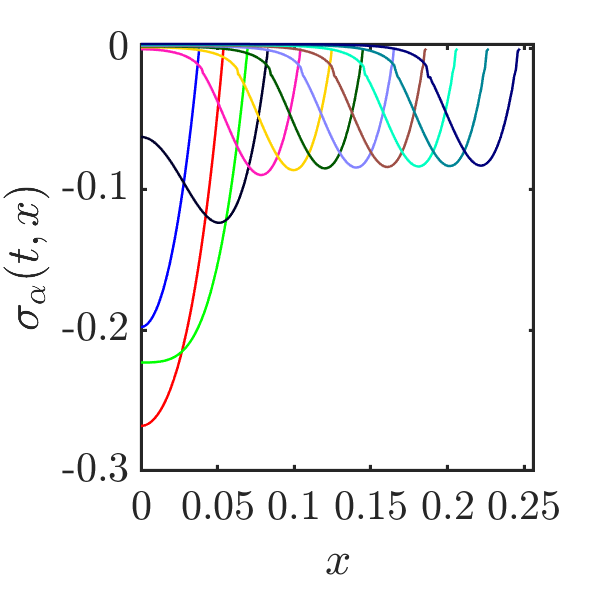}
				\label{fig:fs_vf2}
			\end{subfigure}
		} 
	\end{minipage}
	\begin{minipage}{0.99\textwidth}
		\centering
		\vspace{0.2cm}
		\includegraphics[scale=0.6]{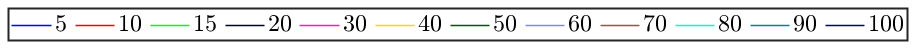}
	\end{minipage}
	\caption{Tumour growth in free suspension:\;
		The numerical results show how the nutrient concentration $(c(t,x))$, cell volume fraction $(\alpha(t,x))$, cell velocity $(u_{\alpha}(t,x))$, and cell stress $(\sigma_{\alpha}(t,x))$ evolve over time when the tumour is grown in free suspension. Simulation results are plotted at times $t \in \{5,10,15,20,30,40,50,60,70,80,90,100\}$ and were generated using the default parameter values stated in Table~\ref{tab:model_param}. }
	\label{fig:fres_sus_gr}
\end{figure}

At early times, the tumour is small and nutrient levels are everywhere sufficiently highly to enable cell proliferation. As a result, for $0 < t < 15$, the tumour cell velocity increases
monotonically with $x$, driving net tumour growth. At the same time, since the cells are viscous, they are resistant to movement. Consequently, the stress in the cell phase becomes
more compressive, especially near the tumour centre. 

As the tumour increases in size, nutrient levels at its centre, $x=0$, decrease until eventually they are so low that 
that cells start to die ($t >  20$). Cell death in the central necrotic core leads to a reduction in the tumour cell volume fraction there (dead cells are converted to fluid), relieving the compressive stress experienced by the tumour cells. 
By contrast, in the outer, nutrient-rich region, tumour cells continue to proliferate and drive tumour growth.
Since the tumour's net proliferation rate decreases when the necrotic core forms, the cell velocity on $x=\ell(t)$ and, hence, the tumour's overall growth rate decrease slightly, although both remain positive.
At long times, the outer proliferating rim attains a fixed width, at which the rate at which nutrient is supplied by diffusion balances the rate at which it is consumed by proliferating tumour cells. 
At the same time, the size of the central necrotic core increases,  and the cell velocity and stress there are negligible. 
Between the central necrotic core and the outer proliferating rim, an intermediate region forms in which the cells proliferate.
As a result of their resistance to movement, the cells in the intermediate region experience a compressive stress.    
Since the tumour is growing in free suspension, there is no mechanical resistance to its expansion
and, so, the magnitude of the compressive stress decreases from its minimum value, which is attained in the intermediate region,
towards the outer tumour boundary. 

As mentioned above, at long times, the cell velocity on the tumour boundary $x=\ell(t)$ evolves to a constant, positive value, indicating sustained and constant rate of tumour growth. This is consistent with the plot in Figure~\ref{fig:rad_time_theta} which 
shows that when $\vartheta = \nu = 0$ (i.e., when the tumour is grown in free suspension) the tumour boundary increases at a constant
rate. 

Taken together, the simulations results presented in Figure~\ref{fig:fres_sus_gr} and Figure~\ref{fig:rad_time_theta} show 
that the tumour's growth is limited by nutrient availability when it is cultured in free suspension. 

\subsection{Tumour growth limited by mechanical stress} 
\label{sec:small_tum}

\begin{figure}[htp]
	\begin{minipage}{0.94\textwidth}
		\resizebox{.95\textwidth}{!}{	
			\begin{minipage}{0.5cm}
				\centering
				\rotatebox{90}{\parbox{4cm}{\color{black} \centering  $\vartheta = 0.5$}}
			\end{minipage}
			\begin{minipage}{14cm}	
				\centering
				\begin{subfigure}[t]{0.33\textwidth}
					\caption{}
					\includegraphics[scale=0.5]{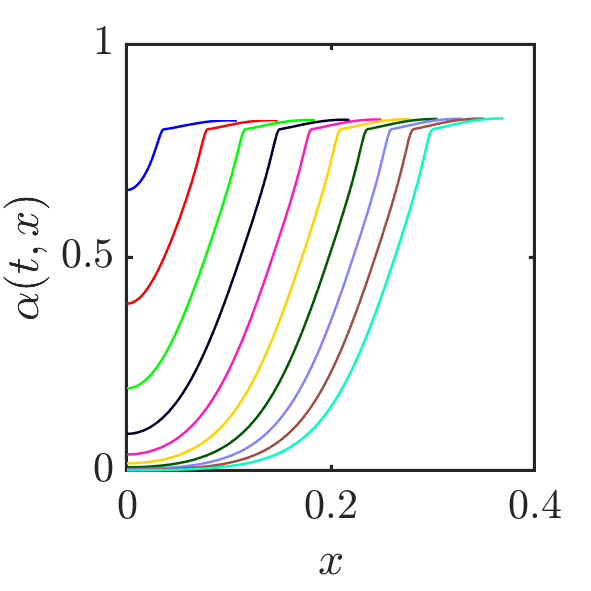}
					\label{fig:vf0p1}
				\end{subfigure}
				\begin{subfigure}[t]{0.33\textwidth}
					\caption{}
					\includegraphics[scale=0.5]{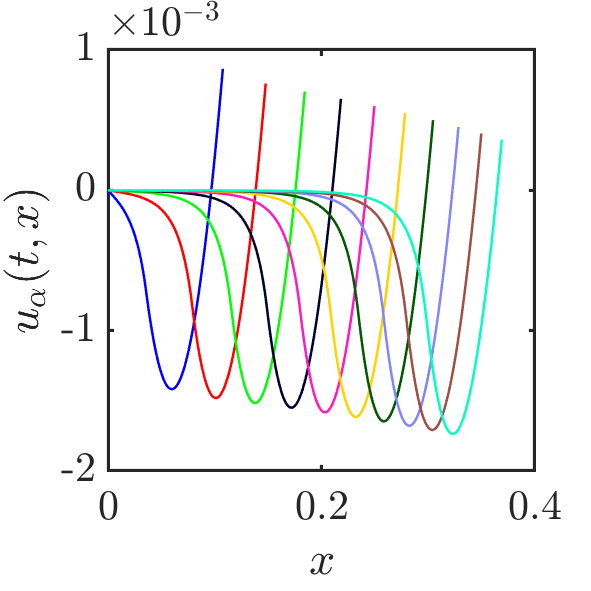}
					\label{fig:vel0p1}
				\end{subfigure}
				\begin{subfigure}[t]{0.32\textwidth}
					\caption{}
					\includegraphics[scale=0.5]{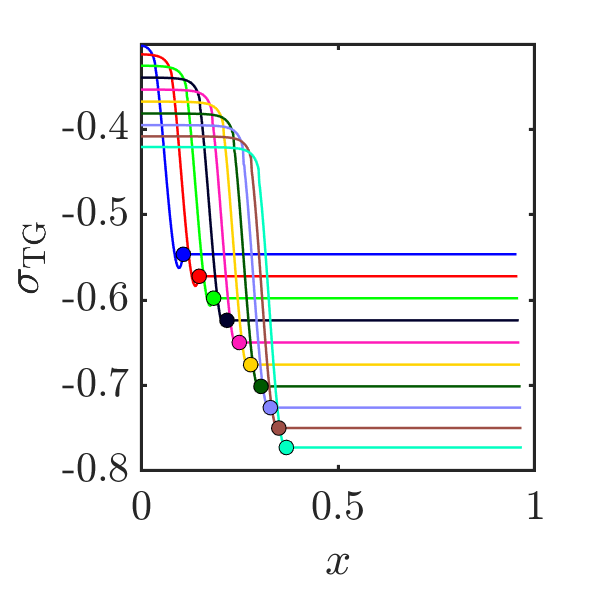}
					\label{fig:th_s1}
				\end{subfigure}
			\end{minipage}
		}
		\resizebox{.95\textwidth}{!}{	
			\begin{minipage}{0.2cm}
				\centering
				\rotatebox{90}{\parbox{4cm}{\color{black} \centering  $\vartheta = 2$}}
			\end{minipage} 
			\begin{minipage}{14cm}	
				\centering
				\begin{subfigure}[t]{0.32\textwidth}
					\caption{}
					\includegraphics[scale=0.5]{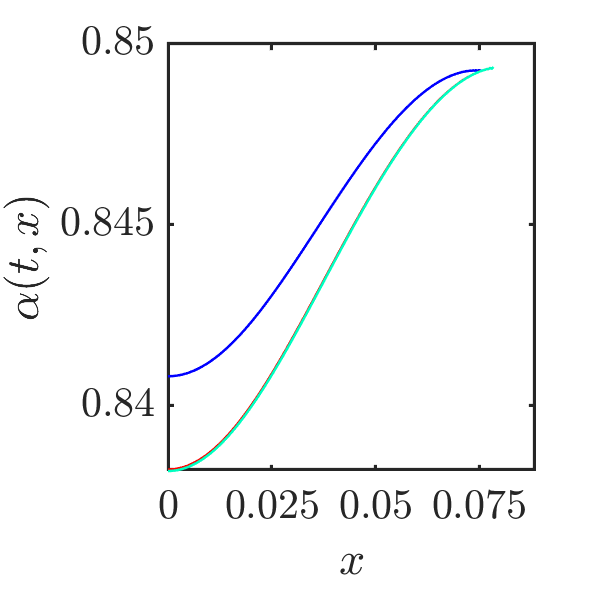}
					\label{fig:vf1}
				\end{subfigure}
				\begin{subfigure}[t]{0.32\textwidth}
					\caption{}
					\includegraphics[scale=0.5]{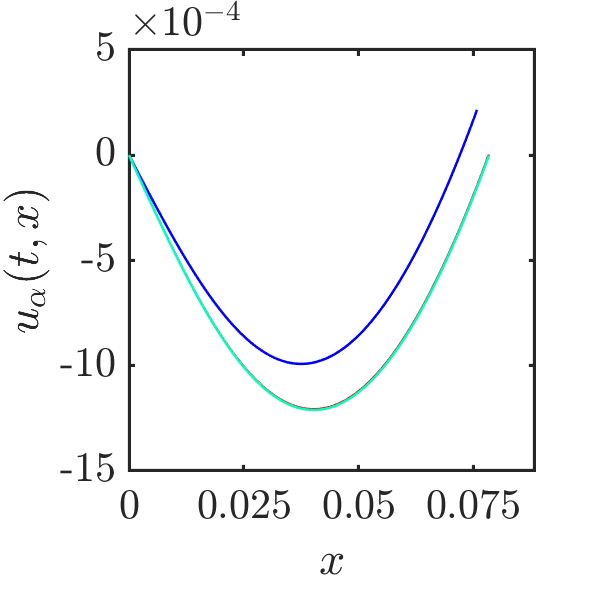}
					\label{fig:vel0p2}
				\end{subfigure}
				\begin{subfigure}[t]{0.32\textwidth}
					\caption{}
					\includegraphics[scale=0.5]{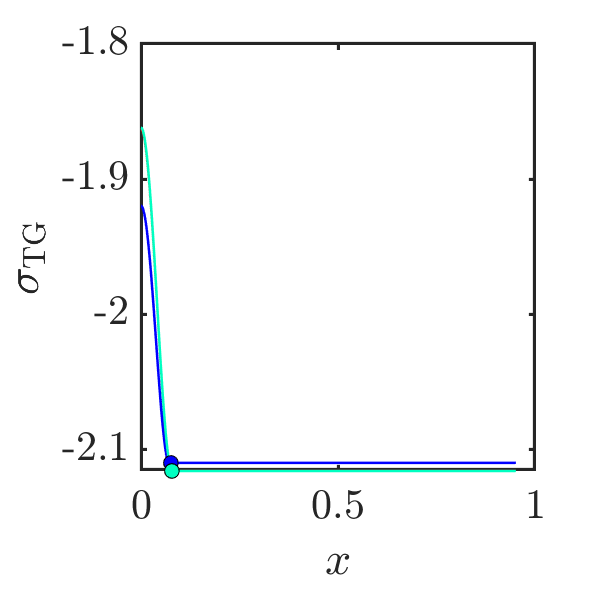}
					\label{fig:th_s2}
				\end{subfigure}
			\end{minipage}
		}
		\resizebox{.95\textwidth}{!}{	
			\begin{minipage}{0.4cm}
				\centering
				\rotatebox{90}{\parbox{3cm}{ \color{black} \centering $\vartheta = 20$}}
			\end{minipage}
			\begin{minipage}{14cm}	
				\centering
				\begin{subfigure}[t]{0.33\textwidth}
					\caption{}
					\includegraphics[scale=0.5]{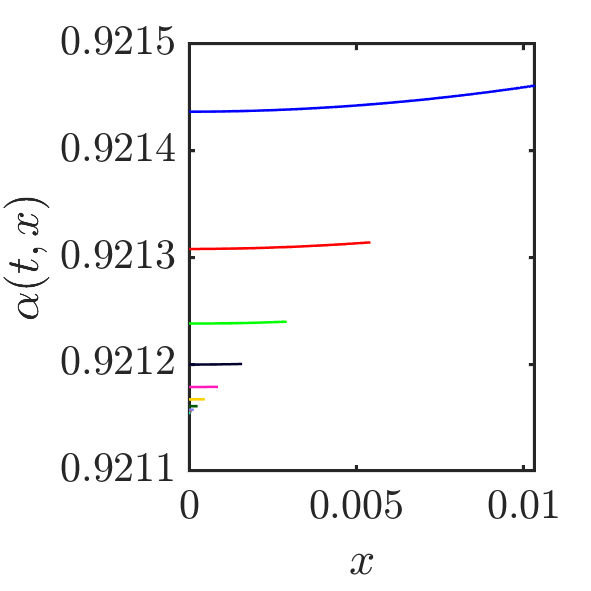}
					\label{fig:vf_20}
				\end{subfigure}
				\begin{subfigure}[t]{0.33\textwidth}
					\caption{}
					\includegraphics[scale=0.5]{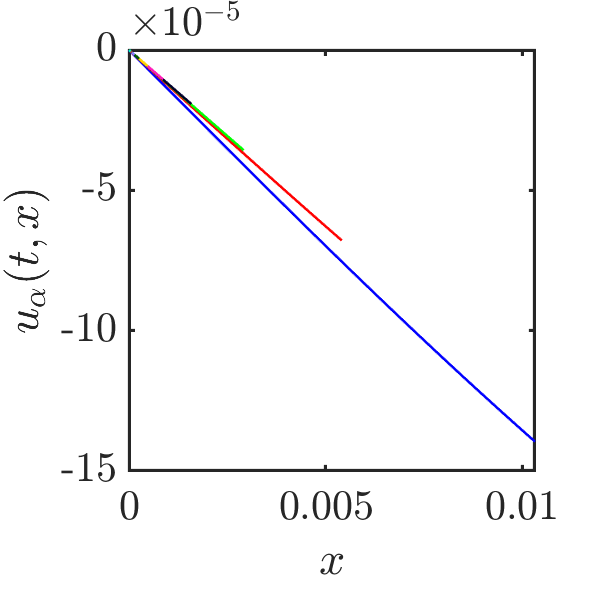}
					\label{fig:vel_20}
				\end{subfigure}
				\begin{subfigure}[t]{0.32\textwidth}
					\caption{}
					\includegraphics[scale=0.5]{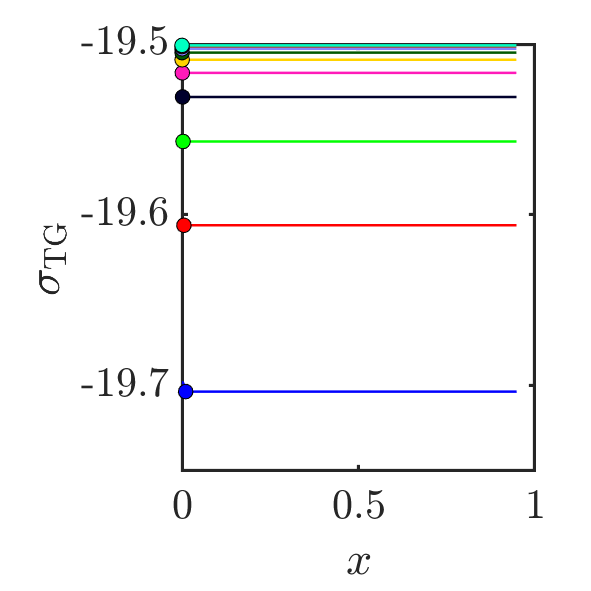}
					\label{fig:th_s3}
				\end{subfigure}
			\end{minipage}
		}
	\end{minipage}
	\begin{minipage}{3cm}
		\centering 
		Time values:
	\end{minipage}
	\begin{minipage}{14cm}
		\begin{subfigure}[c]{\textwidth}
			\centering	\includegraphics[scale=0.6]{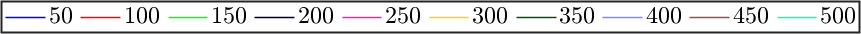}
		\end{subfigure}
	\end{minipage}
	
	\caption{Tumour growth limited by mechanical stress:\;
		Numerical results showing how the growth dynamics of a tumour embedded in a hydrogel change as the material properties of the hydrogel vary (all other model parameters are fixed at the default values stated in Table~\ref{tab:model_param}). 	
		For fixed values of $\vartheta$ ($\vartheta = 0.5,\,2.0$ and $20$), we show how the spatial distribution of the cell volume fraction $(\alpha(t,x))$ (Figures~\ref{fig:vf0p1},~\ref{fig:vf1},~\ref{fig:vf_20}), cell velocity $(u_{\alpha}(t,x))$ (Figures~\ref{fig:vel0p1},~\ref{fig:vel0p2},~\ref{fig:vel_20}), and mechanical stress $(\sigma_{\rm TG}(t,x))$ (Figures~\ref{fig:th_s1},~\ref{fig:th_s2},~\ref{fig:th_s3}) evolve over time. Note that the axes in each sub-plot is scaled differently.}
	\label{fig:effect_theta}
\end{figure}

\begin{figure}[h!]
	\centering
	\includegraphics[scale=1]{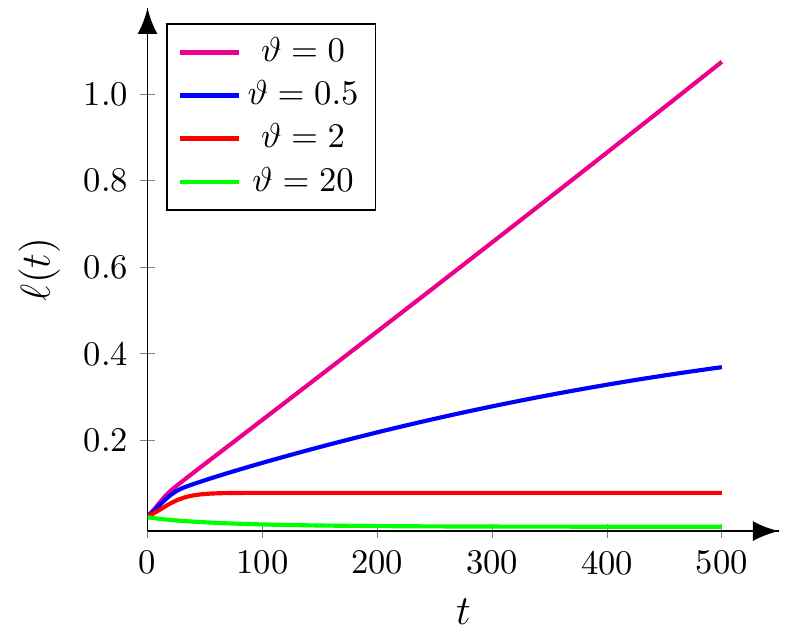}
	\caption{Tumour lengths over time for growth under mechanical stress:\; In all simulations, $\nu = 0$ and $\vartheta$ varies as indicated ($\vartheta = 0.0, 0.5, 2, 20$.)}
	\label{fig:rad_time_theta}
\end{figure}
We simulate tumour growth in a hydrogel by fixing the values of $\vartheta $ and $\nu$ so 
that the tumour experiences mechanical resistance to its growth ({\em i.e.}, $\sigma^{\mathrm{H}} < 0$).  
In what follows, we fix $\nu  = 0$ and vary $\vartheta$  between $\vartheta = 0.5$ and $\vartheta=20$. 
The numerical results presented in Figure~\ref{fig:effect_theta} 
show how the tumour cell volume fraction, $\alpha(t,x)$, the tumour cell velocity, $u_\alpha(t,x)$, and the mechanical stresses in the tumour and hydrogel,
$\sigma_{\rm TG} (t,x)$ change over time, where
\begin{align*}
\sigma_{\rm TG} = \begin{cases}
\sigma_\alpha\;\;\text{ if }\;\;0 \le x \le \ell(t), \\
\sigma^{\mathrm{H}}\;\text{ if }\;\;\ell(t) \le x \le 1.
\end{cases}
\end{align*}
The time evolution of the nutrient concentration is not plotted since it is similar to that shown in Figure~\ref{fig:nutr00}.  
Figure~\ref{fig:rad_time_theta} shows how the tumour length evolves over time.
We note that the tumour and hydrogel stresses are continuous at the tumour boundary, as indicated by the coloured circles on the line plots in Figures~\ref{fig:th_s1},~\ref{fig:th_s2}, and~\ref{fig:th_s3}. 
We note also that since $\sigma_{\alpha} < 0$ at $x = \ell(t)$, cells on the tumour boundary are mechanically compressed by the hydrogel. By contrast, when the tumour is grown in free suspension, $\sigma_{\alpha} = 0$ at $x = \ell(t)$  (see Figure~\ref{fig:fs_vf2}).  We remark that axes in each sub-plots in  Figure~\ref{fig:effect_theta} are scaled differently to resolve evolution of the corresponding variables more clearly. 
\medskip

The simulation results presented in Figure~\ref{fig:effect_theta} show that, when $\nu=0$, the tumour exhibits three qualitatively different behaviours as the hydrogel stiffness parameter $\vartheta$ varies:

\medskip
\noindent \textbf{Small $\vartheta$ ($0 \le \vartheta \le 1$):} the tumour grows to an equilibrium size that supports the formation of a necrotic core. 
The profiles for the tumour cell volume fraction, $\alpha(t,x)$, and velocity, $u_\alpha(t,x)$, are qualitatively similar to those for tumour growth in free
suspension (compare Figure~\ref{fig:effect_theta}(a,b) and Figure~\ref{fig:fres_sus_gr}).  However, the stress profiles are different. 
When the tumour is embedded in hydrogel, the entire tumour is under compressive stress ($\sigma_\alpha(t,x) < 0$ for $0 \leq x \leq \ell(t)$).
In the necrotic core, the stress is spatially uniform and increases in magnitude over time; this contrasts with the behaviour in free suspension, where the cell stress is zero in the necrotic core.
The mechanical stress is  most compressive on the tumour boundary. 
As the tumour increases in size and the hydrogel becomes more compressed, the stress that it exerts on the tumour boundary increases in magnitude, causing the tumour cell velocity there to decrease.  At long times, the cell velocity on $x = \ell(t)$ approaches zero and the tumour evolves 
to a steady state, with $0 < \ell_{\rm in} = \ell(t=0) <  \ell(\infty) < 1$ (see Figure~\ref{fig:rad_time_theta}). In this case, the tumour's growth is limited by the combined effects of
nutrient availability and the compressive forces it experiences from the hydrogel. 

\medskip
\noindent \textbf{Intermediate $\vartheta$  ($1 \le \vartheta \le 11$): } the tumour grows but, with increased compressive stress from the surrounding hydrogel, it rapidly evolves to a non-trivial equilibrium whose small size precludes the formation of a necrotic core because the nutrient concentration throughout the tumour exceeds the threshold for necrotic cell death (see the second row of Figure~\ref{fig:effect_theta}). 
In this case, the tumour's growth dynamics are dominated by inhibitory mechanical feedback from the hydrogel. 

\medskip
\noindent \textbf{Large $\vartheta$ ($\vartheta > 11$):} in this case, the stress in the hydrogel is extremely large in magnitude and compressive in nature. The compressive stress is so large that it prevents growth-induced tumour expansion and leads to  to its eventual elimination ($\ell (t) \rightarrow 0$ as $t \rightarrow \infty$) (see Figure~\ref{fig:rad_time_theta}).  
Here, the tumour's growth dynamics are dominated by inhibitory mechanical effects from the hydrogel. 

\medskip
We now investigate how the tumour's growth dynamics change as the hydrogel stiffness parameters vary. 
In Figures~\ref{fig:nu_const} and~\ref{fig:theta_const}, we show how the evolution of the tumour's outer boundary $x=\ell(t)$, changes when the 
parameters $\vartheta$ and $\nu$ are varied. As $\vartheta$ increases, the tumour's equilibrium size decreases but the timescale on which 
it attains its steady state remains constant, to leading order. By contrast, increasing $\nu$ has a dual effect: the tumour evolves more slowly to a larger steady state configuration.

\begin{figure}[h!]
	\begin{subfigure}[c]{0.45\textwidth}
		\caption{}
		\includegraphics[scale=0.9]{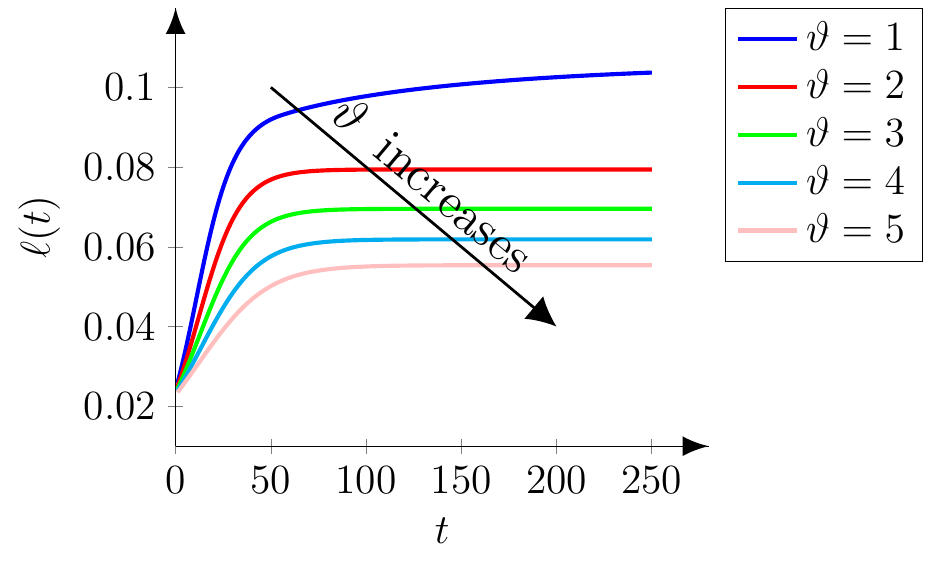}
		\label{fig:nu_const}
	\end{subfigure}
	\hspace{1cm}
	\begin{subfigure}[c]{0.45\textwidth}		
		\caption{}
		\includegraphics[scale=0.9]{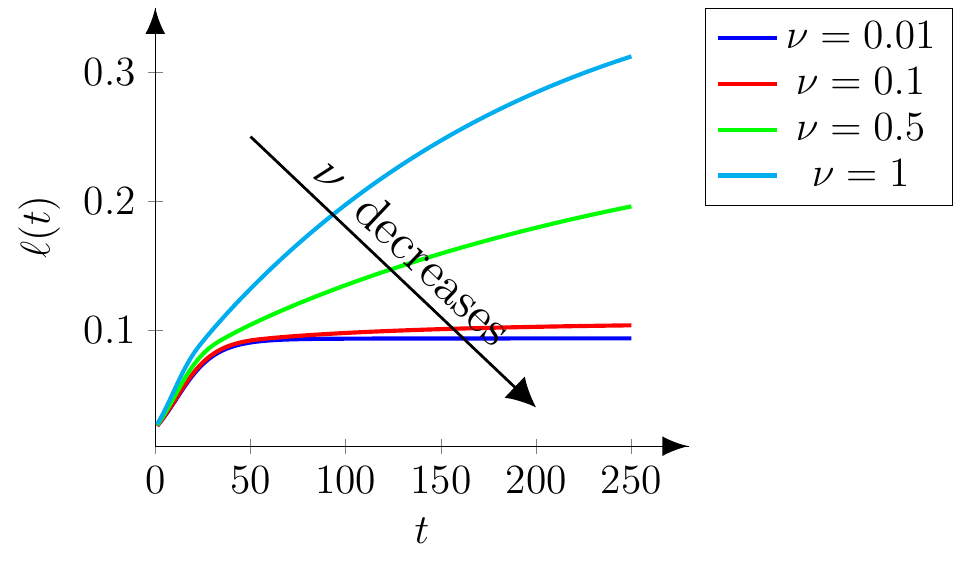}
		\label{fig:theta_const}
	\end{subfigure}
	\newline
	\centering
	\begin{subfigure}[c]{0.5\textwidth}
		\caption{}
		\centering
		\includegraphics[scale=0.6]{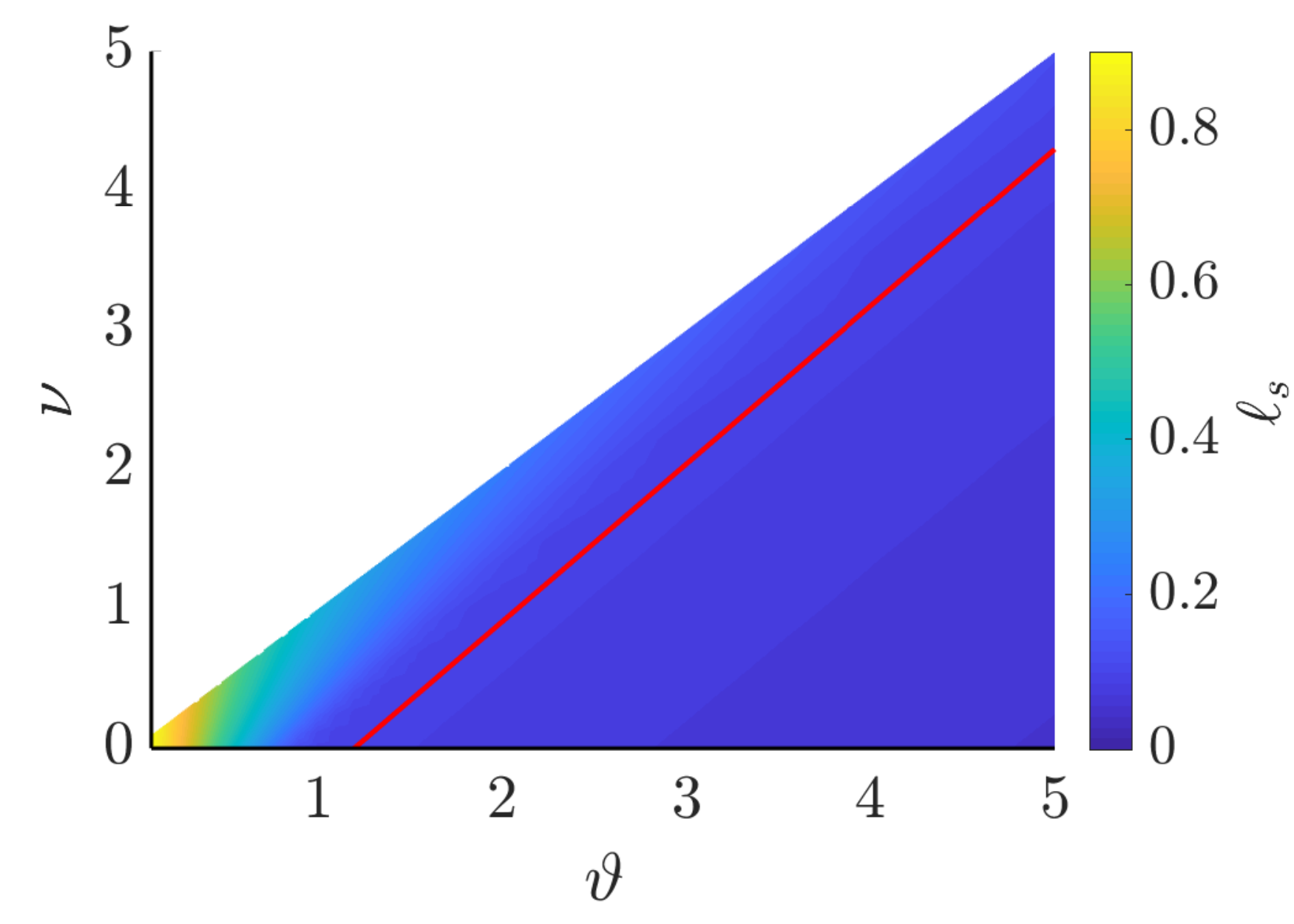}
		\label{fig:nutheta}
	\end{subfigure}
	\caption{Series of plots showing how the tumour's growth dynamics, notably its length $x = \ell(t)$, change as parameters governing the mechanical properties of the
		hydrogel in which it is embedded are varied. (a) When $\nu = 0.1$, the tumour's equilibrium size decreases as $\vartheta$ increases but the timescale
		over which it attains its equilibrium remains approximately constant. (b) When $\vartheta = 1$, the tumour's equilibrium size increases and the timescale on which it relaxes
		to its equilibrium also increases. (c) Diagram showing how the equilibrium tumour length changes as $\vartheta$ and $\nu$ vary (with $0 \leq \nu \leq \vartheta$). 
		For parameter values above the red line, the equilibrium tumour possesses a central necrotic core while for parameter values below the red line it does not.}
\end{figure}
For completeness, Figure~\ref{fig:nutheta} shows how the equilibrium tumour size, $x = \ell_s = \lim_{t \rightarrow \infty} \ell(t)$, changes when 
both $\nu$ and $\vartheta$ are varied, with $\vartheta \geq \nu \geq 0$. We estimate $\ell_s$ by solving the governing equations for $0 < t \leq T = 1000,$ and fixing 
$\ell_s = \ell(t=T)$.  
For $(\vartheta, \nu)$ combinations that lie above the red line, the tumour attains an equilibrium size with $\ell_s  \geq 0.01$. 
In such cases, insufficient nutrient reaches the central tumour region and a necrotic core forms. 
For $(\vartheta, \nu)$ combinations below the red line, the equilibrium tumour size is sufficiently small that nutrient levels throughout the tumour exceed the threshold for necrotic cell death.
We note also that the tumour only attains an equilibrium size with $\ell_s = 0.9$ for very small values of $\vartheta$ (towards the bottom--left corner in Figure~\ref{fig:nutheta}). Otherwise, the tumour attains an equilibrium size less than $\ell_s \approx 0.4$, indicating the influence of external compressive 
stress in controlling the final tumour size.  
In Figure~\ref{fig:nutheta}, parameter values for which tumour elimination occurs are not shown, in order to better resolve values of  
$(\vartheta,\nu)$ for which a necrotic core forms. However, extensive numerical simulations indicate that tumour elimination occurs for $\vartheta > 11$.

\section{Model analysis}  \label{sec:tumour_decay}

In this section we use asymptotic analysis to study the long time behaviour of our model and
to identify distinct parameter regimes in which the tumour is eliminated or attains a non-trivial steady state. 

The numerical results presented in Figures~\ref{fig:vf_20} and ~\ref{fig:vel_20} show that, when the hydrogel is very stiff, 
spatial variation in the cell volume is negligible, and the cell velocity is small and negative, and decreases linearly with distance from the tumour centre. Further,
the tumour length is small.
In order to investigate this behaviour, 
we define $\epsilon = \ell_{\rm in}$ and, guided by the parameter values stated in Table~\ref{tab:model_param}, we assume that 
\begin{align}
0 < \ell_{\rm in}  = \epsilon \ll 1 \text{ and } 0 < \epsilon^2 \dfrac{k}{\mu} \ll 1.
\end{align}
We seek approximate solutions of the governing equations which are regular power series expansions in the small parameter $\epsilon$:
\begin{align}
\alpha(t,x) ={}& \alpha_0(t) + \epsilon \alpha_1(t,\rho) + \mathcal{O}(\epsilon^2), \\
u_{\alpha}(t,x) ={}& \epsilon u_1(t,\rho) + \mathcal{O}(\epsilon^2), \\
c(t,x) ={}& 1 + \mathcal{O}(\epsilon^2) , \\
\ell(t) ={}& \epsilon \ell_1(t) + \mathcal{O}(\epsilon^2).
\end{align}
In the above expansions, we rescale the spatial coordinates so that $\rho = x /\epsilon$. 
We substitute the above expansions in Equations~\eqref{eqn:vf_dless}--\eqref{eqn:bvel_dless} and equate terms of $\mathcal{O}(\epsilon^n)$ ($n=0, 1$)
to arrive at the following equations for $\alpha_0$, $u_1$ and $\ell_1$:

\begin{align} \label{eqn:ldng-odr-cvf}
\dfrac{\mathrm{d} \alpha_0}{\mathrm{d} t} + \alpha_0 \dfrac{\partial u_1}{\partial \rho} ={}& \alpha_0(1 - \kappa - \alpha_0),\\ \label{eqn:ldng-odr-cv}
\dfrac{\partial^2 u_1}{\partial \rho^2} ={}& 0, \\
\dfrac{\mathrm{d} \ell_1}{\mathrm{d} t} ={}& u_1(t,\ell_1(t)). \label{eqn:ldng-odr-ell} \\
\end{align}
Here, we recall that $\kappa = (s_2 + s_3)/(1 + s_4)$, where $s_2,s_3$ and $s_4$ are non-negative parameters that control the death rate of the tumour cells (see Equation~(\ref{eqn:cv_dless})).
It is straightforward to show that, at leading order, Equations~\eqref{eqn:in_cond}--\eqref{eqn:bdr_cond_1} supply the following boundary and initial conditions for 
$\alpha_0$, $u_1$ and $\ell_1$:
\begin{gather}
\alpha_0(0) = \alpha_{\rm ini},\;\;  \\ \label{eqn:bc_1}
u_1(t,0) ={} 0, \\ \label{eqn:bc_2} 
\mu \dfrac{\partial u_1}{\partial \rho} (t, \ell_1(t)) = (\nu - \vartheta) + \dfrac{(\alpha_0(t) -\alpha^{\ast})^{+}}{(1 - \alpha_0(t))^2}, \\
\ell_1(0) = 1. 
\end{gather}
Integrating Equation~\eqref{eqn:ldng-odr-cv}, subject to boundary conditions~\eqref{eqn:bc_1} and~\eqref{eqn:bc_2}, yields the following
expression for $u_1(t,\rho)$:
\begin{align} \label{eqn:u1_steady}
u_1(t,\rho) = \dfrac{1}{\mu}\left((\nu - \vartheta) + \dfrac{(\alpha_0(t) -\alpha^{\ast})^{+}}{(1 - \alpha_0(t))^2}\right) \rho. 
\end{align}
Substituting for $u_1$ in Equations~\eqref{eqn:ldng-odr-cvf}  and \eqref{eqn:ldng-odr-ell}, we obtain the following 
differential equations for $\alpha_0$ and $\ell_1$:
\begin{align}\label{eqn:alpha0_steady}
\dfrac{\mathrm{d} \alpha_0}{\mathrm{d} t}  ={}& \alpha_0\left(\alpha_c - \alpha_0  - \dfrac{(\alpha_0 -\alpha^{\ast})^{+}}{\mu (1 - \alpha_0)^2} \right), \\
\label{eqn:ell1_steady}
\dfrac{\mu}{\ell_1} \dfrac{\mathrm{d} \ell_1}{\mathrm{d} t} ={}& (\nu - \vartheta) + \dfrac{(\alpha_0 -\alpha^{\ast})^{+}}{(1 - \alpha_0)^2},
\end{align}
where the critical volume fraction $\alpha_{\rm c}$ is defined as follows:
\begin{align}
\alpha_{\rm c} := 1 - \kappa + \dfrac{(\vartheta - \nu)}{\mu}.
\end{align}
Recall that we have assumed $\vartheta > \nu$ so that the hydrogel exerts a compressive force on the tumour.  
Since $\kappa \in (0,1)$, this ensures that $\alpha_{\rm c} > 0$. For a given value of $\alpha_{\rm c}$, we denote the steady state values of $\alpha_0(t)$ and $\ell_{1}(t)$ by $\alpha_{0\infty} := \lim_{t \rightarrow \infty} \alpha(t)$ and $\ell_{1\infty} := \lim_{t \rightarrow \infty} \ell(t)$. 

\medskip \noindent 
By setting $\frac{\mathrm{d}}{\mathrm{d}t} = 0$ in Equation~\eqref{eqn:alpha0_steady}, it is straightforward to show that there are two possible steady state solutions for $\alpha_0(t)$.

\paragraph{The trivial steady state.} 
If $\alpha_{0\infty} = 0$, then  Equation~\eqref{eqn:ell1_steady} yields
\begin{align}
\lim_{t \rightarrow \infty} \dfrac{\mu}{\ell_1} \dfrac{\mathrm{d} \ell_1}{\mathrm{d}t} = \nu -\vartheta < 0,
\end{align}
since $\nu < \vartheta$. We conclude that, in this case, $\ell_1(t)$ decays to zero at long times.  In practice, however, this case is \emph{physically unrealistic}. 
Since, nutrient levels inside a small tumour will be high (and approximately equal to the nutrient levels in the hydrogel), the tumour cells will proliferate rapidly and their volume fraction will be non-zero. 
Henceforth, we reject further consideration of this case. 

\paragraph{The non-trivial steady state.} 
The non-trivial steady state tumour cell volume fraction, $\alpha_{0 \infty}$  satisfies the following condition:
\begin{equation}
\label{eqn:alpha_c_stst}
\alpha_c - \alpha_{0\infty}  = \dfrac{(\alpha_{0\infty} -\alpha^{\ast})^{+}}{\mu (1 - \alpha_{0\infty})^2} \equiv \dfrac{\mathscr{H}(\alpha_{0\infty})}{\mu}\ge 0.
\end{equation}
\noindent There are two cases to consider, depending on whether $\alpha_{0 \infty} \le \alpha^\ast$ or $\alpha_{0 \infty} > \alpha^\ast$. 

\medskip \noindent 
\textbf{Case 1:} If $\alpha_{0 \infty} \le \alpha^\ast$, then $\mathscr{H}(\alpha_{0\infty}) =  0$ and Equation~\eqref{eqn:alpha0_steady} supplies $\alpha_{0\infty} = \alpha_{\rm c}$.  
Further, by referring to Equation~\eqref{eqn:ell1_steady}, we deduce that the tumour will be eliminated at rate $- (\vartheta - \nu)$.   In this case, $\alpha_{\rm c} \le \alpha_{\ast}$ is a necessary condition for $\alpha_{0\infty} = \alpha_{\rm c}$ (see, also, Figure~\ref{fig:interplot}).  
\begin{figure}[h!]
	\centering
	\begin{subfigure}{0.47\textwidth}
		\centering
		\caption{}
		\includegraphics[scale=0.9]{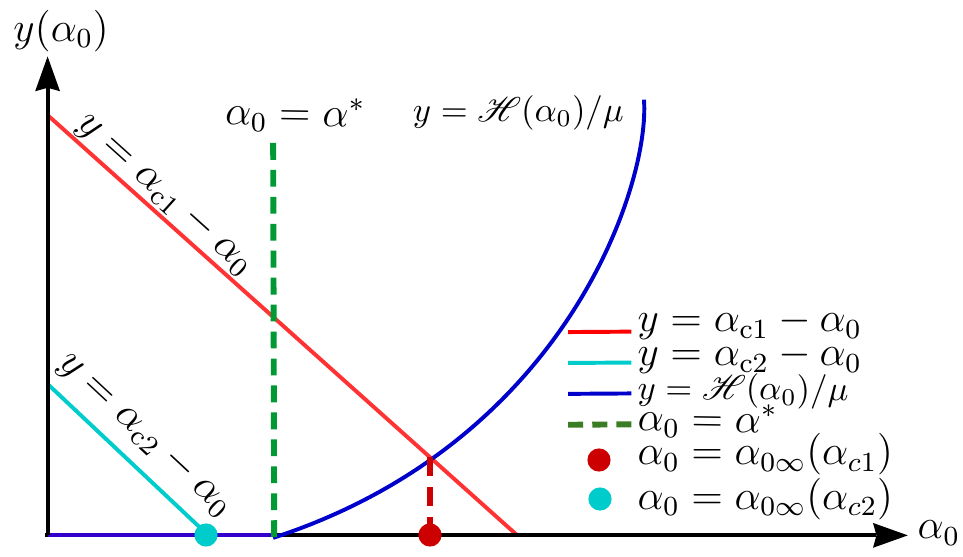}
		\label{fig:interplot}
	\end{subfigure}
	\hspace{0.7cm}
	\begin{subfigure}{0.47\textwidth}
		\caption{}
		\includegraphics[scale=1.2]{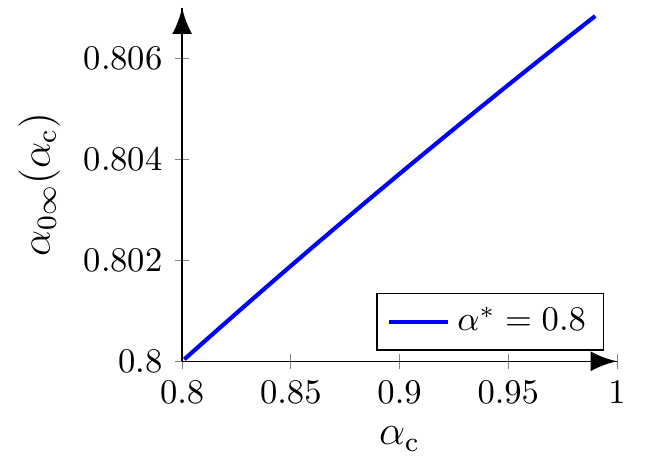}
		\label{fig:ac_steady}
	\end{subfigure}
	\label{fig:steady_alpha_0}
	\caption{ 
		Pair of plots showing how the non-trivial steady state value of the tumour cell volume fraction $\alpha_{0 \infty}$ changes as the parameter 
		$\alpha_{\rm c} = 1 - \kappa + (\vartheta - \nu)/\mu$ varies (see Equation~\eqref{eqn:alpha_c_stst}). 
		(a) The intersection of the curves $y = \mathscr{H}(\alpha_0)/\mu$ and $y = \alpha_{\mathrm{c}} - \alpha_0$ shows the steady state volume fraction $\alpha_{0\infty}(\alpha_{\rm c})$ for 
		$\alpha_{\rm c} \leq \alpha^\ast $ (cyan) and for $\alpha_{\rm c} > \alpha^\ast $ (red). (b) The steady state volume fraction $\alpha_{0 \infty}$ increases as $\alpha_{\rm c}$ increases.  
		In Figure~\ref{fig:ac_steady} we use Equation~(\ref{eqn:alpha_c_stst}) to show that $\alpha_{0\infty}$ increases monotonically with $\alpha_{\rm c} \in (\alpha^\ast,1)$ when $\alpha^\ast = 0.8$.}
\end{figure}

\medskip \noindent 
\textbf{Case 2:} If $\alpha_{0 \infty} > \alpha^{\ast}$, then the non-trivial steady state is the unique positive solution of
\begin{align}
\alpha_c - \alpha_{0\infty} = \dfrac{\alpha_{0\infty}  -\alpha^{\ast}}{\mu (1 - \alpha_{0\infty})^2}.
\end{align}
As shown in Figure~\ref{fig:interplot} for $\alpha_{\rm c} > \alpha^\ast$, 
the steady state volume fraction satisfies $ \alpha_{\ast} < \alpha_{0\infty} < \min(\alpha_{\rm c},1)$ and  Equation~\eqref{eqn:ell1_steady} supplies
\begin{align}
\lim_{t \rightarrow \infty}\dfrac{1}{\ell_1} \dfrac{\mathrm{d}\ell_1}{\mathrm{d}t} =  \alpha_{\rm c} - \alpha_{0\infty} -  \dfrac{(\vartheta - \nu)}{\mu} <  \alpha_c - \alpha^\ast - \dfrac{(\vartheta - \nu)}{\mu}.
\end{align}
Recalling that $\vartheta > \nu$, we deduce that if  $\alpha_{\rm c} < \alpha^\ast + (\vartheta -\nu)/\mu$, then the tumour decays. \\

In Figure~\ref{fig:ac_steady} we show that the non-trivial steady state $\alpha_{0\infty} = \alpha_{0\infty}(\alpha_{\rm c})$ 
is a monotonically increasing function of $\alpha_c$. To understand this behaviour, we proceed as follows. 
Choose $\alpha_{\rm c_1}$ and $\alpha_{\rm c_2}$ so that $\alpha_{\rm c_1} \ge \alpha_{\rm c_2}$ and suppose that $\alpha_{0\infty}(\alpha_{\rm c_1}) <  \alpha_{0\infty}(\alpha_{\rm c_2})$.
Rearranging Equation~\eqref{eqn:alpha_c_stst} we have that
\begin{align}
\alpha_{\rm c_1}  = 	\alpha_{0\infty}(\alpha_{\rm c_1}) + \dfrac{(\alpha_{0\infty}(\alpha_{\rm c_1}) -\alpha^{\ast})}{\mu (1 - \alpha_{0\infty}(\alpha_{\rm c_1}))^2}  < 
\alpha_{0\infty}(\alpha_{\rm c_2})  + \dfrac{(\alpha_{0\infty}(\alpha_{\rm c_2}) -\alpha^{\ast})}{\mu (1 - \alpha_{0\infty}(\alpha_{\rm c_2}))^2} = \alpha_{\rm c_2},
\end{align}
which contradicts our assumption that  $\alpha_{\rm c_1}\ge \alpha_{\rm c_2}$ and, so, we conclude that $\alpha_{0 \infty}$ increases monotonically with $\alpha_{\rm c}$.

With $\alpha_{\rm c} = 1 - \kappa + (\vartheta - \nu)/\mu$, it is now clear that both $\alpha_{\rm c}$ and $\alpha_{0 \infty}$ are increasing functions of the hydrogel stiffness parameter $\vartheta$.  
Recall that the tumour is a two-phase mixture of viscous cells and inviscid fluid. Since the fluid is less resistant to flow than the cells, as $\alpha_{\rm c}$ (or equivalently $\vartheta$) increases, 
the compressive force acting on the tumour increases, fluid is squeezed out of the tumour and the tumour cell volume fraction increases (see Figure \ref{fig:effect_theta}).  

We end this section with a remark on the tumour's long-time growth rate. 
If $u_\alpha(t,\ell(t)) \leq 0$ for all $t \geq 0$, then the tumour is eventually eliminated. By contrast, if $u_\alpha(t,\ell(t)) \geq 0$ for all $t \geq 0$, it is possible to prove 
that $u_\alpha(t,\ell(t)) \rightarrow 0$ when $\ell(t) \rightarrow 1$ (see Appendix~\ref{sec:boxdomain}). 
As the tumour expands, the polymer and water molecules in the hydrogel become increasingly packed until, eventually, the short-range repulsive forces between them cause the hydrogel to become effectively incompressible. Since the hydrogel and tumour occupy a finite, one-dimensional domain, incompressibility eventually leads to length invariance of the hydrogel and, hence, the tumour.
In practice, and as the numerical results in Figure 7 indicate, the tumour becomes mechanically inhibited from growth when $\ell (t) \approx 0.9$ (\emph{i.e.,} before the tumour reaches the domain 
boundary, at $x=1$).

\section{Discussion}
\label{sec:discussion}
We have presented a new mathematical model that describes the one-dimensional growth of an avascular tumour spheroid embedded within a deformable hydrogel. 
The tumour is viewed as a two-phase mixture of cells and fluid and the hydrogel is modelled as a compressible, hyperelastic material. 
Following the approach outlined in~\cite{IMA::breward_2002}, the principles of mass and momentum conservation 
are used to derive partial differential equations that describe the time evolution of
the volume fractions and velocities of the cell and fluid phases.
Following~\cite{MMNP::Lewin:2020} and generalising the approaches in~\cite{IMA::breward_2002} and~\cite{IMA::Chen2001}, 
the tumour and cell phases are viewed as viscous and inviscid fluids, respectively. 
Further, local levels of an externally-supplied, diffusible nutrient regulate local rates of tumour cell proliferation and death.
As the tumour increases in size, it deforms the hydrogel which, in turn, exerts a compressive stress on the tumour.
The balance between the expansive pressure caused by tumour growth and the resistance caused by hydrogel deformation determines
the tumour's growth dynamics. 

We presented numerical results which show how the tumour's growth dynamics change as the material parameters of the hydrogel are varied. For soft hydrogels,
the tumour grows to a large size, and its dynamics are similar to those associated with growth in free suspension, where nutrient availability 
is growth-rate limiting. For stiffer hydrogels, 
the tumour evolves to a non-trivial equilibrium whose size is limited by both nutrient availability and mechanical effects. For very stiff hydrogels, mechanical effects are dominant and the compressive forces exerted by the hydrogel on the tumour are so large that it is eliminated.  
The asymptotic analysis presented in Section~\ref{sec:tumour_decay} confirms the numerical results 
and enables us to identify parameter regimes in which tumour elimination is predicted.

There are many ways in which the model presented in this paper could be extended. 
In the \emph{in vitro} context, possible directions for future research include: 
extension of the governing equations to describe tumour growth in two and three spatial dimensions,
and treatment of the hydrogel as a two-phase, poroelastic material to enable explicit consideration of fluid flow between the tumour and hydrogel.
We could also account for behavioural changes in the tumour cells caused by mechanotransduction.
For example, tumour cells may respond to compressive mechanical stress by reducing their rate of cell proliferation
and increasing their rate of cell death~\cite{delarue,Roose:2003}. They may also release 
protein-digesting enzymes (proteases) that remodel the extracellular matrix in which they are embedded, reducing the 
mechanical stress that they experience, and creating space into which they can expand~\cite{park,clerk,franks2003a}.
Such model extensions would enable investigation of the interplay between tumour growth, hydrogel compression, and tumour-induced remodelling of the hydrogel. 

It would also be interesting to specialise the model to study tumour growth \emph{in vivo}: this could be achieved
by including additional phases to account for the vasculature~\cite{IMA::hubbard},
different types of immune cells (e.g., macrophages and T cells)~\cite{web2007,owen2004}
and stromal cells (e.g., fibroblasts), 
their cross talk with each other and the tumour cells, and also remodelling of the vasculature and extracellular matrix in which the tumour is located~\cite{dyson2016}.
By simulating these effects, we aim to increase understanding of the ways in which changes in the extracelluar matrix
impact tumour growth and responses to existing treatments such as radiotherapy, chemotherapy and immunotherapy~\cite{esfahni2020,thariat2013,chabner2005} 
while also identifying those that could be used to inhibit tumour progression~\cite{guissani}.   

Other directions that merit further consideration relate to validation and parameterisation of the model against experimental data, and also its comparison with
the earlier models proposed by~\cite{IMA::Chen2001, Roose:2003, Yan:2021}.
At present, the available experimental data are limited to measurements of tumour volume~\cite{Helmlinger1997}, spatial staining for cell proliferation and cell death and measurements of stress distribution in the hydrogel~\cite{rakesh1}.  In the numerical simulations presented in this paper, we have used parameter values from previous literature~\cite{IMA::breward_2002,IMA::breward_2001} and several interesting qualitative features of the model variables are illustrated. However, rigorous parameter estimation for this model is not yet undertaken and would be an exciting avenue for further work including data integration methods for incorporating multiple sources of data and model selection, both of which have become common in other data-driven areas of mathematical biology \cite{alahmadi2020influencing}.

\medskip
\noindent \textbf{Acknowledgements.} GCR acknowledges Monash University for funding. He also thanks N Nataraj and J Droniou for support during his PhD studies. 
We also thank Avner Friedman for his many, seminal contributions to mathematical biology and for being an inspirational role model to us and so many other researchers in the field. 


\bibliographystyle{unsrtnat}
\bibliography{thesis}

\appendix

\section{Long time behaviour if tumour length approaches domain boundary}
\label{sec:boxdomain}

The numerical results presented in the main text show that sustained tumour growth compresses the hydrogel. 
As the compressive stress that the hydrogel exerts on the tumour increases, eventually the tumour's growth rate tends to zero
(i.e., the tumour cannot grow indefinitely). We formalise this observation in Theorem~\ref{thm:steady_state}.

\begin{theorem}
	Let $(\alpha, u_{\alpha}, c, \ell)$ be a solution of Equations~\eqref{eqn:model} such that $u_{\alpha}(t,\ell(t)) \ge 0$ for every 
	$t \ge 0$. Suppose that there are positive constants  $\alpha_{\rm \ell}$ and $\alpha_{\rm u}$ such that 
	$0 < \alpha_{\ell} \le \alpha \le\alpha_{u} < 1$. 
	If $\vartheta \ge \nu$, then $\displaystyle \lim_{\ell(t) \rightarrow 1} \frac{\mathrm{d} \ell}{\mathrm{d}t}= 0$. 	
	\label{thm:steady_state}
\end{theorem}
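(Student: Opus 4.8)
The plan is to read the claim as an \emph{a priori} bound on the interface velocity $u_{\alpha}(t,\ell(t))$ coming directly from the quasi-static momentum balance~\eqref{eqn:cv_dless} and its boundary conditions~\eqref{eqn:bdr_cond_1}--\eqref{eqn:bdr_cond_2}; neither the nutrient equation~\eqref{eqn:ot_dless} nor the transport equation~\eqref{eqn:vf_dless} for $\alpha$ is needed. The mechanism is that, because $\vartheta\ge\nu$, the hydrogel stress in~\eqref{eqn:stress_hydrogel} satisfies $|\sigma^{\mathrm{H}}(t)| = \vartheta\,\tfrac{1-\ell_0}{1-\ell(t)} - \nu\,\tfrac{1-\ell(t)}{1-\ell_0} \to \infty$ as $\ell(t)\to 1$ (and $\sigma^{\mathrm{H}}(t)<0$ once $\ell(t)$ is near $1$). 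I would show that $u_{\alpha}(t,\ell(t))$ is controlled by $1/|\sigma^{\mathrm{H}}(t)|$ and then let $\ell(t)\to 1$.

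First I would test~\eqref{eqn:cv_dless} with $u_{\alpha}$ on $(0,\ell(t))$. Writing $\Sigma(s)=(s-\alpha^{\ast})^{+}/(1-s)^{2}$ and integrating by parts, the endpoint contribution at $x=0$ drops out since $u_{\alpha}(t,0)=0$, while the endpoint contribution at $x=\ell(t)$ is rewritten using the stress-continuity condition $\mu\,\partial_{x}u_{\alpha}(t,\ell(t))-\Sigma(\alpha(t,\ell(t)))=\sigma^{\mathrm{H}}(t)$ from~\eqref{eqn:bdr_cond_2}. This yields the identity
\[
\alpha(t,\ell(t))\,\sigma^{\mathrm{H}}(t)\,u_{\alpha}(t,\ell(t)) = \mu\int_{0}^{\ell(t)}\alpha\left(\frac{\partial u_{\alpha}}{\partial x}\right)^{2}\mathrm{d}x + \int_{0}^{\ell(t)}\frac{k\,\alpha\,u_{\alpha}^{2}}{1-\alpha}\,\mathrm{d}x - \int_{0}^{\ell(t)}\alpha\,\Sigma(\alpha)\,\frac{\partial u_{\alpha}}{\partial x}\,\mathrm{d}x .
\]
The first two terms on the right are non-negative; the last, indefinite-sign term is the only real obstacle, and I would dispatch it with Young's inequality, $\bigl|\alpha\,\Sigma(\alpha)\,\partial_{x}u_{\alpha}\bigr|\le \tfrac{\mu}{2}\alpha(\partial_{x}u_{\alpha})^{2}+\tfrac{1}{2\mu}\alpha\,\Sigma(\alpha)^{2}$, which absorbs half of the Dirichlet term and leaves a remainder $\tfrac{1}{2\mu}\int_{0}^{\ell(t)}\alpha\,\Sigma(\alpha)^{2}\,\mathrm{d}x$. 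Since $\alpha\ge\alpha_{\ell}$ and $s\mapsto s\,\Sigma(s)$ is bounded on $[\alpha_{\ell},\alpha_{u}]$ by some finite $M$ (this is where $\alpha_{u}<1$ is used), and since $\ell(t)\le 1$, this remainder is at most $M^{2}/(2\mu\alpha_{\ell})$, a constant independent of $t$.

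Discarding the surviving non-negative terms then leaves $\alpha(t,\ell(t))\,\sigma^{\mathrm{H}}(t)\,u_{\alpha}(t,\ell(t))\ge -M^{2}/(2\mu\alpha_{\ell})$. Because $u_{\alpha}(t,\ell(t))\ge 0$ by hypothesis and $\sigma^{\mathrm{H}}(t)<0$ for $\ell(t)$ near $1$, the left-hand side equals $-\alpha(t,\ell(t))\,|\sigma^{\mathrm{H}}(t)|\,u_{\alpha}(t,\ell(t))$; bounding $\alpha(t,\ell(t))\ge\alpha_{\ell}$ I would conclude
\[
0 \le \frac{\mathrm{d}\ell}{\mathrm{d}t} = u_{\alpha}(t,\ell(t)) \le \frac{M^{2}}{2\mu\,\alpha_{\ell}^{2}\,|\sigma^{\mathrm{H}}(t)|}.
\]
Letting $\ell(t)\to 1$ drives $|\sigma^{\mathrm{H}}(t)|\to\infty$, so the right-hand side vanishes and $\mathrm{d}\ell/\mathrm{d}t\to 0$, which is the assertion.

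I expect the only delicate points to be the sign-and-weight bookkeeping in the Young step — choosing the weight so that the cross term is genuinely absorbed and, crucially, checking that the leftover constant $M^{2}/(2\mu\alpha_{\ell})$ does not secretly depend on $t$, which is exactly what $\ell(t)\le 1$ guarantees — together with a routine regularity caveat: the test-function manipulation presupposes $u_{\alpha}(t,\cdot)\in H^{1}(0,\ell(t))$, which is part of the solution concept used here. Everything else (the evaluation of the boundary terms, the non-negativity of the drag and Dirichlet contributions, and the blow-up of $|\sigma^{\mathrm{H}}(t)|$) is immediate from the stated equations and the two-sided bound on $\alpha$.
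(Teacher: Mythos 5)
Your proposal is correct and follows essentially the same route as the paper: both test the momentum balance~\eqref{eqn:cv_dless} with $u_{\alpha}$, use $u_{\alpha}(t,\ell(t))\,\sigma^{\mathrm{H}}\le 0$ together with the two-sided bound on $\alpha$ to control the cell--cell interaction term, and conclude $u_{\alpha}(t,\ell(t))\le C/|\sigma^{\mathrm{H}}|\to 0$ as $\ell(t)\to 1$. The only (cosmetic) difference is that you absorb the indefinite term $\int \alpha\,\Sigma(\alpha)\,\partial_{x}u_{\alpha}$ in one step via Young's inequality, whereas the paper first derives an $L^{2}$ bound on $\partial_{x}u_{\alpha}$ by Cauchy--Schwarz and then substitutes it back; both yield the same conclusion.
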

\begin{proof}
	Multiply Equation~\eqref{eqn:cv_dless} by $u_{\alpha}(t,x)$ and apply integration by parts to obtain
	\begin{align}
	\int_{0}^{\ell(t)} \dfrac{ k \alpha}{1 - \alpha} u_{\alpha}^2\mathrm{d}x + \int_{0}^{\ell(t)} \mu \alpha \left(\dfrac{\partial u_{\alpha}}{\partial x}\right)^2 \,\mathrm{d}x = \int_{0}^{\ell(t)}  \mathscr{H}(\alpha)\dfrac{\partial u_{\alpha}}{\partial x}\,\mathrm{d}x + u_{\alpha}(t,\ell(t))  \sigma^{\mathrm{H}} . 
	\label{eqn:lt_1}
	\end{align}
	Since $u_{\alpha}(t,\ell(t)) \ge 0$ and  $\sigma^{\mathrm{H}} \le 0$, it follows that $u_{\alpha}(t,\ell(t))  \sigma^{\mathrm{H}} \le 0$. 
	Then Equation~\eqref{eqn:lt_1} supplies
	\begin{align} \label{eqn:ineq_1}
	\int_{0}^{\ell(t)} \mu \alpha \left(\dfrac{\partial u_{\alpha}}{\partial x}\right)^2 \,\mathrm{d}x \le \int_{0}^{\ell(t)}  \mathscr{H}(\alpha)\dfrac{\partial u_{\alpha}}{\partial x}\,\mathrm{d}x \le \sqrt{\ell(t)}\sup_{(0,\ell(t))}|\mathscr{H}(\alpha)| \left(  \int_{0}^{\ell(t)}  \left(\dfrac{\partial u_{\alpha}}{\partial x} \right)^2\,\mathrm{d}x \right)^{1/2}
	\end{align}
	where we have applied the Cauchy-Schwartz inequality. 
	Since $0 < \alpha_{\ell} \le \alpha \le \alpha_{u} < 1$, Equation~\eqref{eqn:ineq_1} yields 
	\begin{align}
	\mu \alpha_\ell \left(\int_{0}^{\ell(t)} \left(\dfrac{\partial u_\alpha}{\partial x}\right)^2 \,\mathrm{d}x \right)^{1/2} \le \sqrt{\ell(t)} \dfrac{\alpha_{u}|\alpha_{u}  - \alast |}{|1 - \alpha_{u}|^2}. 
	\label{eqn:lt_2}
	\end{align}
	Since the left-hand side of Equation~\eqref{eqn:lt_1} is non-negative, we obtain 
	$$u_{\alpha}(t,\ell(t)) \le \frac{1}{ |\sigma^{\mathrm{H}}|}\int_{0}^{\ell(t)}  \mathscr{H}(\alpha)\frac{\partial u_{\alpha}}{\partial x}\,\mathrm{d}x. $$ 
	Then, application of Cauchy-Schwartz inequality to $\int_{0}^{\ell(t)}  \mathscr{H}(\alpha)\frac{\partial u_{\alpha}}{\partial x}\,\mathrm{d}x$, together with Equation~\eqref{eqn:lt_2}, leads to 
	\begin{align}
	u_{\alpha}(t,\ell(t)) \le \dfrac{\ell(t)}{ \mu \alpha_{\ell}|\sigma^{\mathrm{H}}_{\vert \ell(t)}|} \left( \dfrac{\alpha_{u}|\alpha_{u}  - \alast |}{|1 - \alpha_{u}|^2} \right)^2.
	\end{align}
	From Equation~\eqref{eqn:gel}, we have that $|\sigma^{\mathrm{H}}| \rightarrow \infty$ as $\ell(t) \rightarrow 1$. 
	Therefore, it follows that
	$$\lim_{\ell(t) \rightarrow 1} \frac{\mathrm{d} \ell}{\mathrm{d}t} = \lim_{\ell(t) \rightarrow 1} u_{\alpha}(t,\ell(t)) = 0,$$ 
	as required. 
	\end{proof}

\begin{remark}
	Theorem~\ref{thm:steady_state} does not explicitly describe the time evolution of the tumour length. 
	Rather, it asserts that the cell velocity approaches zero as the tumour approaches the domain boundary (i.e., as $\ell(t) \rightarrow 1$). 
	Thus, the tumour length undergoes sigmoidal growth, as in Figure~\ref{fig:nu_const}, wherein $\ell(t)$ increases monotonically at early times and then plateaus at a stationary value. 
	This theoretical result is consistent with the experimental observations in~\cite{Helmlinger1997}. 
\end{remark}

\end{document}